\titleformat*{\subsection}{\normalsize\bfseries}
\def\thanks#1{\protected@xdef\@thanks{\@thanks
		\protect\footnotetext{#1}}}
\newcommand{\subjclass}[2][2020]{%
	\let\@oldtitle\@title%
	\gdef\@title{\@oldtitle\footnotetext{#1 \emph{Mathematics subject classification:} #2}}%
}
\newcommand{\keywords}[1]{%
	\let\@@oldtitle\@title%
	\gdef\@title{\@@oldtitle\footnotetext{\emph{Key words and phrases:} #1.}}%
}
\DeclareSymbolFont{largesymbolsstix}{LS2}{stixex}{m}{n}
\DeclareMathDelimiter{\lbrbrak}{\mathopen}{largesymbolsstix}{"EE}{largesymbolsstix}{"14}
\DeclareMathDelimiter{\rbrbrak}{\mathclose}{largesymbolsstix}{"EF}{largesymbolsstix}{"15}
\newcommand{\nocontentsline}[3]{}
\newcommand{\tocless}[2]{\bgroup\let\addcontentsline=\nocontentsline#1{#2}\egroup}
\theoremstyle{definition}
\newtheorem{defin}{Definition}[section]
\newtheorem{rem}[defin]{Remark}
\theoremstyle{plain}
\newtheorem{theor}[defin]{Theorem}
\newtheorem{lem}[defin]{Lemma}
\newtheorem{prop}[defin]{Proposition}
\newtheorem{cor}[defin]{Corollary}
\newtheoremstyle{dotless-thm}
{3pt}
{3pt}
{}
{}
{}
{.}
{.5em}
{}
\theoremstyle{dotless-thm}
\def\Q{{\mathcal{A}}}
\def\Q{{\mathcal{Q}}}
\author{Thomas Kahl\thanks{This research was partially supported by FCT (\emph{Fundação para a Ciência e a Tecnologia}, Portugal) through projects UIDB/00013/2020 and UIDP/00013/2020.} }
\affil{\small{Centro de Matem\'atica,
	Universidade do Minho,\\ Campus de Gualtar,
	4710-057 Braga,
	Portugal\\
\texttt{kahl@math.uminho.pt}}
}
\begin{document}

\title{On symmetric higher-dimensional automata and bisimilarity}

\date{}

\subjclass{68Q85}

\keywords{Higher-dimensional automata, symmetric precubical set, bisimulation}

\maketitle 

\begin{abstract}
It is shown that a higher-dimensional automaton is hhp-bisimilar to the free symmetric HDA generated by it. Consequently, up to hereditary history-preserving bisimilarity, ordinary HDAs and symmetric HDAs are models of concurrency with the same expressive power. 
\end{abstract}

\section*{Introduction}

A higher-dimensional automaton (HDA) is a precubical set with an initial state, a set of final states, and a labeling on 1-cubes such that opposite edges of 2-cubes have the same label \cite{vanGlabbeek, Pratt}. An HDA is thus
a labeled transition system (or an ordinary automaton) with two- and higher-dimensional cubes linking its states and transitions. An \(n\)-cube in an HDA indicates that the \(n\) transitions starting at its origin are independent in the sense that they may occur in any order, or even simultaneously,  without any observable difference. It has been shown in \cite{vanGlabbeek} that HDAs are a very expressive model of concurrency. 

An important category of HDAs is the one of symmetric HDAs, i.e., HDAs with symmetric underlying precubical sets. The construction of HDAs from other models of concurrency often yields symmetric HDAs (see, e.g., \cite{GaucherCombinatorics, vanGlabbeek, GoubaultLabCubATS, GoubaultMimram}). Ordinary and symmetric HDAs are related by an adjunction: a symmetric HDA is, in particular, an HDA, and conversely, every HDA freely generates a symmetric HDA. In this paper, we compare ordinary and symmetric HDAs with respect to hereditary history-preserving bisimilarity in the sense of van Glabbeek \cite{vanGlabbeek} and establish that an HDA and the free symmetric HDA generated by it are hhp-bisimilar. This result implies as a consequence that, up to hereditary history-preserving bisimilarity, ordinary and symmetric HDAs are equally  expressive models of concurrency.

\renewcommand\contentsname{\normalsize Contents}
\renewcommand{\cftsecleader}{\cftdotfill{\cftdotsep}}
\setlength{\cftparskip}{-10pt}
\setlength{\cftbeforetoctitleskip}{0.8cm}
\setlength{\cftaftertoctitleskip}{0.7cm}
\renewcommand{\cftsecfont}{\normalfont}
\renewcommand{\cftsecpagefont}{\normalfont}
\tableofcontents

\section{Precubical sets and HDAs} \label{SecPrel}

This section recalls the definitions of precubical set and higher-dimensional automaton.

\subsection*{Precubical sets} \label{precubs}

A \emph{precubical set} is a graded set \(P = (P_n)_{n \geq 0}\) with \emph{face maps} \(d^k_i\colon P_n \to P_{n-1}\) \((n>0,\;k= 0,1,\; i = 1, \dots, n)\) satisfying the relations \(d^k_i d^l_{j}= d^l_{j-1} d^k_i\) \((k,l = 0,1,\; i<j)\). If \(x\in P_n\), we say that \(x\) is of \emph{degree} \(n\). The elements of degree \(n\) are called the \emph{\(n\)-cubes} of \(P\). The elements of degree \(0\) are also called the \emph{vertices} of \(P\), and the \(1\)-cubes are also called the \emph{edges} of \(P\). The \emph{\(i\)th starting edge} of a cube \(x\) of degree \(n > 0\) is the edge \(e_ix = d_1^{0} \cdots d_{i-1}^ {0}d_{i+1}^{0}\cdots d_n^{0}x\). A \emph{morphism} of precubical sets is a morphism of graded sets that is compatible with the face maps. The category of precubical sets can be seen as the presheaf category of functors \(\square^{\textsf{op}} \to {\mathsf{Set}}\) where \(\square\) is the small subcategory of the category of topological spaces whose objects are the standard \(n\)-cubes \([0,1]^n\) \((n \geq 0)\) and whose nonidentity morphisms are composites of the \emph{coface maps} \(\delta^k_i\colon [0,1]^n\to [0,1]^{n+1}\) (\(k \in \{0,1\}\), \(n \geq 0\), \(i \in  \{1, \dots, n+1\}\)) given by \(\delta_i^k(u_1,\dots, u_n)= (u_1,\dots, u_{i-1},k,u_i \dots, u_n)\).

\subsection*{Higher-dimensional automata} \label{HDAdef}

Throughout this paper, let \(\Sigma\) be an alphabet. A \emph{higher-di\-mensional automaton} (HDA) over \(\Sigma\) is a tuple \[\Q = (P_{\Q},I_{\Q},F_{\Q},\lambda_{\Q})\] where  \(P_{\Q}\) is a precubical set, \({I_{\Q} \in (P_{\Q})_0}\) is a vertex, called the \emph{initial state}, \({F_{\Q} \subseteq (P_{\Q})_0}\) is a (possibly empty) set of \emph{final states}, and \(\lambda_\Q \colon (P_{\Q})_1 \to \Sigma\) is a map, called the \emph{labeling function}, such that  \(\lambda_{\Q} (d_i^0x) = \lambda_{\Q} (d_i^1x)\) for all \(x \in (P_{\Q})_2\) and \(i \in \{1,2\}\) \cite{vanGlabbeek}. Higher-dimensional automata form a category, in which a morphism from an HDA \(\Q\) to an HDA \(\Q'\) is a morphism of precubical sets  \(f\colon P_\Q \to P_{Q'}\) such that \(f(I_{\Q}) = I_{\Q'}\), \(f(F_{\Q}) \subseteq F_{\Q'}\), and  \(\lambda_{\Q'}(f(x)) =  \lambda_{\Q}(x)\) for all \(x \in (P_\Q)_1\).

\section{The precubical set of permutations}

It is well known that the family of symmetric groups can be given the structure of a \emph{skew-simplicial} or \emph{crossed simplicial group} \cite{Krasauskas, FiedorowiczLoday}. This implies  that it also can be given the structure of a precubical set. In this section, we describe this structure and prove a number of basic facts about it. Recall that the \emph{symmetric group} \(S_n\) is the set of permutations of \(\{1, \dots, n\}\) with composition as multiplication. Here we understand that \(\{1, \dots , 0\} = \emptyset\) and that \(S_0 = \{id_\emptyset\}\).

\subsection*{The maps \texorpdfstring{\(\downarrow i\)}{} and \texorpdfstring{\(\uparrow i\)}{}}

For an integer \(i\), we define the maps \(\downarrow i\) and \(\uparrow i\) on integers by 
\[m^{\downarrow i} = \left\{\begin{array}{ll}
m, & m \leq i,\\
m-1, & m > i
\end{array} \right. \quad \mbox{and} \quad m^{\uparrow i} = \left\{\begin{array}{ll}
m, & m < i,\\
m + 1, & m \geq i.
\end{array} \right.\]
Note that \(m^{\uparrow i \downarrow i} = m\) and, for \(m \not= i\), \(m^{\downarrow i \uparrow i} = m\). Note also that for \(i <j\),
\[m^{\downarrow j \downarrow i} = m^{\downarrow i \downarrow j-1} = \left\{ \begin{array}{ll}
m, & m \leq i,\\
m-1, & i < m \leq j,\\
m-2, & m > j.	
\end{array}\right.\]
We remark that the maps \(\downarrow i\) and \(\uparrow i\) are used to define the coface and codegeneracy maps in the simplex category, which plays an important role in the theory of simplicial sets (see, e.g., \cite{GoerssJardine}).

\subsection*{The face maps of \texorpdfstring{\(S\)}{}}

For \(n \geq 1\), \(\theta \in S_n\), \(i \in \{1, \dots, n\}\), and \(k \in \{0,1\}\), we define the permutation \(d^k_i\theta \in S_{n-1}\) (using one-line notation) by
\[d^k_i\theta = (\theta(1)^{\downarrow i} \;\; \theta(2)^{\downarrow i} \;\; \cdots \;\; \theta(\theta^{-1}(i)-1)^{\downarrow i}\;\; \theta(\theta^{-1}(i)+1)^{\downarrow i} \;\; \cdots\;\; \theta(n)^{\downarrow i}).\]
Thus, more explicitly,
\begin{align*}
d^k_i\theta(j) &= \left \{ \begin{array}{ll}
\theta(j), & j < \theta^{-1}(i),\; \theta(j) < i,\\
\theta(j) - 1, & j < \theta^{-1}(i),\; \theta(j) > i,\\
\theta(j+1), & j \geq \theta^{-1}(i),\; \theta(j+1) < i,\\
\theta(j+1)-1,& j \geq \theta^{-1}(i),\; \theta(j+1) > i.
\end{array}\right.
\end{align*}
Note that by definition, \(d^0_i\theta = d^1_i\theta\). The face maps \(d^k_i\) turn the graded set \(S\) into a precubical set:

\begin{prop}
	For \(1 \leq i < j \leq n\), \(d^k_id^l_j\theta = d^l_{j-1}d^k_i\theta\).
\end{prop}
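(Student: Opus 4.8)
The superscripts play no role here: by definition \(d^0_i = d^1_i\), so it suffices to prove \(d_id_j\theta = d_{j-1}d_i\theta\) with the superscripts suppressed. The plan is to first record a compact reformulation of the face maps. Writing \(p = \theta^{-1}(i)\), the four-case formula collapses to
\[d_i\theta(l) = \theta\big(l^{\uparrow p}\big)^{\downarrow i},\]
since \(l^{\uparrow p} = l\) for \(l < p\) and \(l^{\uparrow p} = l+1\) for \(l \geq p\). Intuitively, \(d_i\theta\) deletes the entry of value \(i\) (located in position \(p\)) from the one-line notation of \(\theta\) and then renormalizes positions by \(\uparrow p\) and values by \(\downarrow i\); the identity to be proved expresses that deleting the entries of values \(i\) and \(j\) in either order yields the same permutation. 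I will also use the analogue for the \(\uparrow\) maps of the composition rule stated above for \(\downarrow\): an identical case check gives, for \(a < c\),
\[m^{\uparrow a \uparrow c} = m^{\uparrow (c-1)\uparrow a},\]
the counterpart of \(m^{\downarrow j\downarrow i} = m^{\downarrow i \downarrow j-1}\).

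The key bookkeeping step, and the part requiring the most care, is to locate the value \(i\) in \(\sigma := d_j\theta\) and the value \(j-1\) in \(\tau := d_i\theta\). Set \(p = \theta^{-1}(i)\) and \(q = \theta^{-1}(j)\); note \(p \neq q\). Solving \(\sigma(l) = i\): since \(i < j\), the equation \(\theta(l^{\uparrow q})^{\downarrow j} = i\) forces \(\theta(l^{\uparrow q}) = i\), hence \(l^{\uparrow q} = p\) and, using \(m^{\downarrow q\uparrow q} = m\) for \(m \neq q\), \(l = p^{\downarrow q}\); thus \(\sigma^{-1}(i) = p^{\downarrow q}\). Similarly, \(\tau(l) = j-1\) forces \(\theta(l^{\uparrow p}) \in \{i,j\}\); the value \(i\) is impossible, as it would give \(l^{\uparrow p} = p\), which lies outside the image of \(\uparrow p\), so \(\theta(l^{\uparrow p}) = j\) and \(\tau^{-1}(j-1) = q^{\downarrow p}\).

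With these in hand I would substitute and compare. Using the compact form twice,
\[d_id_j\theta(m) = \theta\Big(\big(m^{\uparrow p^{\downarrow q}}\big)^{\uparrow q}\Big)^{\downarrow j \downarrow i}, \qquad d_{j-1}d_i\theta(m) = \theta\Big(\big(m^{\uparrow q^{\downarrow p}}\big)^{\uparrow p}\Big)^{\downarrow i \downarrow j-1}.\]
Since \(\downarrow j\downarrow i\) and \(\downarrow i \downarrow j-1\) coincide as maps, the whole equality reduces to matching the positions inside \(\theta\), i.e.\ to
\[\big(m^{\uparrow p^{\downarrow q}}\big)^{\uparrow q} = \big(m^{\uparrow q^{\downarrow p}}\big)^{\uparrow p}.\]

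I would settle this last identity by splitting on the order of \(p\) and \(q\). If \(p < q\), then \(p^{\downarrow q} = p\) and \(q^{\downarrow p} = q-1\), and the claim becomes \(m^{\uparrow p \uparrow q} = m^{\uparrow (q-1)\uparrow p}\), which is exactly the \(\uparrow\)-identity above with \(a = p\), \(c = q\). If \(p > q\), then \(p^{\downarrow q} = p-1\) and \(q^{\downarrow p} = q\), and the claim becomes \(m^{\uparrow(p-1)\uparrow q} = m^{\uparrow q \uparrow p}\), again an instance of the same identity, now with \(a = q\), \(c = p\). In both cases the positions agree, and combined with the coincidence of the value relabelings this completes the argument. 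The only real difficulty is the middle step: getting the shifted pivot positions \(\sigma^{-1}(i) = p^{\downarrow q}\) and \(\tau^{-1}(j-1) = q^{\downarrow p}\) correct, after which the \(\downarrow\)- and \(\uparrow\)-composition rules do all the work.
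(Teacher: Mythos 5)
Your proof is correct; every step checks out, including the compact formula \(d_i\theta(l) = \theta(l^{\uparrow p})^{\downarrow i}\) with \(p = \theta^{-1}(i)\), the pivot locations \((d_j\theta)^{-1}(i) = p^{\downarrow q}\) and \((d_i\theta)^{-1}(j-1) = q^{\downarrow p}\) (where \(q = \theta^{-1}(j)\)), the dual composition law for the \(\uparrow\) maps, and the final case split on the order of \(p\) and \(q\). However, the mechanics differ from the paper's. The paper expands both composites directly in one-line notation: each side is the list obtained from \(\theta\) by deleting the two entries carrying the values \(i\) and \(j\) and relabeling all surviving values, so the positions never need explicit renormalization---deleting two entries from a list in either order yields the same list---and the entire identity reduces to the single fact \(m^{\downarrow j \downarrow i} = m^{\downarrow i \downarrow j-1}\). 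You instead argue pointwise, which splits the problem into a value part and a position part: the value part is settled by that same \(\downarrow\)-identity, but the position part costs you two extra ingredients the paper never needs, namely the shifted-pivot computations above (including the correct exclusion of \(\theta(l^{\uparrow p}) = i\) on the grounds that \(p\) is not in the image of \(\uparrow p\)) and the dual law \(m^{\uparrow a \uparrow c} = m^{\uparrow (c-1)\uparrow a}\) for \(a < c\). What your route buys is a fully formula-driven argument in the style of the standard (co)simplicial identities, with no reliance on the informal picture of deleting entries from a list; what the paper's route buys is brevity, since the list encoding makes the position bookkeeping invisible and leaves only the value relabeling to verify. Your opening reduction dropping the superscripts is also legitimate, as \(d^0_i = d^1_i\) by definition.
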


\begin{proof}
	Set 
	\[r = \left\{\begin{array}{ll}
	i, & \theta^{-1}(i) < \theta^{-1}(j),\\
	j, & \theta^{-1}(i) > \theta^{-1}(j)
	\end{array}\right. \quad \mbox{and} \quad s = \left\{\begin{array}{ll}
	j , & \theta^{-1}(i) < \theta^{-1}(j),\\
	i, & \theta^{-1}(i) > \theta^{-1}(j).
	\end{array} \right.\]
	Since \(i^{\downarrow j} = i\), we have \(i = \theta(\theta^{-1}(i))^{\downarrow j}\) and therefore  
	\begin{align*}
	d^k_id^l_j\theta &= d^k_i(\theta(1)^{\downarrow j} \;\; \cdots \;\; \theta(\theta^{-1}(j)-1)^{\downarrow j}\;\; \theta(\theta^{-1}(j)+1)^{\downarrow j} \;\; \cdots\;\;  \theta(n)^{\downarrow j})\\
	&=  (\theta(1)^{\downarrow j\downarrow i}  \;\; \cdots \;\;  \theta(\theta^{-1}(r)-1)^{\downarrow j\downarrow i}\;\; \theta(\theta^{-1}(r)+1)^{\downarrow j\downarrow i} \;\; \cdots \;\;\\
	& \;\; \;\; \cdots \;\; \theta(\theta^{-1}(s)-1)^{\downarrow j \downarrow i}\;\; \theta(\theta^{-1}(s)+1)^{\downarrow j \downarrow i} \;\; \cdots\;\;  \theta(n)^{\downarrow j\downarrow i}).
	\end{align*}
	Since \(j^{\downarrow i} = j-1\), we have \(j-1 = \theta(\theta^{-1}(j))^{\downarrow i}\) and therefore 
	\begin{align*}
	d^l_{j-1}d^k_i\theta &= d^l_{j-1}(\theta(1)^{\downarrow i}  \;\; \cdots \;\;  \theta(\theta^{-1}(i)-1)^{\downarrow i}\;\; \theta(\theta^{-1}(i)+1)^{\downarrow i} \;\; \cdots\;\;  \theta(n)^{\downarrow i})\\
	&=  (\theta(1)^{\downarrow i\downarrow j-1}  \;\; \cdots \;\;  \theta(\theta^{-1}(r)-1)^{\downarrow i\downarrow j-1}\;\; \theta(\theta^{-1}(r)+1)^{\downarrow i\downarrow j-1} \;\; \cdots \;\;\\
	& \;\; \;\; \cdots \;\; \theta(\theta^{-1}(s)-1)^{\downarrow i \downarrow j-1}\;\; \theta(\theta^{-1}(s)+1)^{\downarrow i \downarrow j-1} \;\; \cdots\;\;  \theta(n)^{\downarrow i\downarrow j-1}).
	\end{align*}
	Since \(m^{\downarrow j \downarrow i} = m^{\downarrow i \downarrow j-1}\), we have \(d^k_id^l_j\theta = d^l_{j-1}d^k_i\theta\).
\end{proof}

The face maps and the multiplication of \(S\) are compatible in the following sense:

\begin{prop} \label{dsigmatheta}
	Let \(n \geq 1\), \(i \in \{1, \dots, n\}\), and \(k \in \{0,1\}\). Then 
	\begin{enumerate}
		\item \(d^k_iid = id\);
		\item \(d^k_i(\sigma\cdot \theta) = d^k_i\sigma \cdot d^k_{\sigma^{-1}(i)}\theta\) for all \(\sigma, \theta \in S_n\); 
		\item \((d^k_i\theta)^{-1} = d^k_{\theta^{-1}(i)}\theta^{-1}\) for all \(\theta \in S_n\).
	\end{enumerate}
\end{prop}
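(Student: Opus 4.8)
The plan is to recast the face map so that the three identities become elementary manipulations of composites of bijections. Regard a permutation $\theta \in S_n$ as a bijection of $\{1, \dots, n\}$, and write $\downarrow i$ also for the bijection $m \mapsto m^{\downarrow i}$ from $\{1, \dots, n\}\setminus\{i\}$ onto $\{1, \dots, n-1\}$, and $\uparrow a$ for the bijection $m \mapsto m^{\uparrow a}$ from $\{1, \dots, n-1\}$ onto $\{1, \dots, n\}\setminus\{a\}$; by the relations $m^{\uparrow i\downarrow i} = m$ and $m^{\downarrow i\uparrow i} = m$ (for $m \neq i$) recorded at the start of the section, $\downarrow a$ and $\uparrow a$ are mutually inverse on the indicated sets. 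The first step is to observe that, for $j \in \{1, \dots, n-1\}$,
\[
d^k_i\theta(j) = \bigl(\theta(j^{\uparrow \theta^{-1}(i)})\bigr)^{\downarrow i},
\]
that is, $d^k_i\theta = \,\downarrow i \,\circ\, \theta \,\circ\, \uparrow(\theta^{-1}(i))$, the composite that first applies $\uparrow(\theta^{-1}(i))$, carrying $\{1, \dots, n-1\}$ onto $\{1, \dots, n\}\setminus\{\theta^{-1}(i)\}$, then $\theta$, carrying this onto $\{1, \dots, n\}\setminus\{i\}$ because $\theta(\theta^{-1}(i)) = i$, and finally $\downarrow i$. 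This is precisely the piecewise formula in the text: with $p = \theta^{-1}(i)$, for $j < p$ one has $j^{\uparrow p} = j$ and the value is $\theta(j)^{\downarrow i}$, while for $j \geq p$ one has $j^{\uparrow p} = j+1$ and the value is $\theta(j+1)^{\downarrow i}$. Assertion (1) is then immediate: for $\theta = id$ we have $p = i$, so $d^k_i\,id = \,\downarrow i \,\circ\, \uparrow i = id$.

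For (2) I would set $p = \sigma^{-1}(i)$ and note that $(\sigma\theta)^{-1}(i) = \theta^{-1}(p)$. Expanding both factors on the right through the composite description gives
\begin{align*}
d^k_i\sigma \cdot d^k_p\theta &= (\downarrow i \,\circ\, \sigma \,\circ\, \uparrow p)\,\circ\,(\downarrow p \,\circ\, \theta \,\circ\, \uparrow(\theta^{-1}(p)))\\
&= \,\downarrow i \,\circ\, \sigma \,\circ\, (\uparrow p \,\circ\, \downarrow p) \,\circ\, \theta \,\circ\, \uparrow(\theta^{-1}(p)).
\end{align*}
The image of $\theta \,\circ\, \uparrow(\theta^{-1}(p))$ lies in $\{1, \dots, n\}\setminus\{p\}$, and on that set $\uparrow p \,\circ\, \downarrow p$ is the identity (this is $m^{\downarrow p\uparrow p} = m$ for $m \neq p$). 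The middle factor therefore cancels, leaving $\downarrow i \,\circ\, (\sigma\theta) \,\circ\, \uparrow(\theta^{-1}(p))$, which is exactly $d^k_i(\sigma\theta)$.

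For (3) I would simply invert the composite. Since $\downarrow a$ and $\uparrow a$ are mutually inverse, taking inverses in $d^k_i\theta = \,\downarrow i \,\circ\, \theta \,\circ\, \uparrow(\theta^{-1}(i))$ yields $(d^k_i\theta)^{-1} = \,\downarrow(\theta^{-1}(i)) \,\circ\, \theta^{-1} \,\circ\, \uparrow i$. On the other hand, applying the composite description to $\theta^{-1}$ at the index $q = \theta^{-1}(i)$ gives $d^k_q\theta^{-1} = \,\downarrow q \,\circ\, \theta^{-1} \,\circ\, \uparrow(\theta(q))$; since $\theta(q) = i$, the two maps coincide.

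The only genuine work is in the setup and the bookkeeping: verifying the composite reformulation, and, at each cancellation, checking that the element it is applied to avoids the deleted value, since $\downarrow a$ and $\uparrow a$ are inverse only on the restricted sets. I expect step (2) to be the main obstacle, as it is the one place where a true cancellation $\uparrow p \,\circ\, \downarrow p = id$ is invoked rather than a purely formal inversion, so one must argue that the intermediate values indeed land in $\{1, \dots, n\}\setminus\{p\}$.
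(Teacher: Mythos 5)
Your proof is correct and takes essentially the same approach as the paper: the paper's proof of (2) likewise consists in exhibiting both sides as one and the same composite of bijections \(\downarrow i\,\circ\,\sigma\,\circ\,\uparrow\sigma^{-1}(i)\,\circ\,\downarrow\sigma^{-1}(i)\,\circ\,\theta\,\circ\,\uparrow\theta^{-1}(\sigma^{-1}(i))\) between the restricted sets, resting on exactly the cancellation \(m^{\downarrow p\uparrow p}=m\) for \(m\neq p\) that you spell out. The only (minor) divergence is in (3), where the paper deduces the identity algebraically from (1) and (2) via \(d^k_i\theta\cdot d^k_{\theta^{-1}(i)}\theta^{-1}=d^k_i(\theta\cdot\theta^{-1})=d^k_i\,id=id\), whereas you invert the composite directly; both are equally short and valid.
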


\begin{proof}
	(1) follows immediately from the definition of \(d^k_iid\).
	
	(2) Both \(d^k_i(\sigma\cdot \theta)\) and  \(d^k_i\sigma \cdot d^k_{\sigma^{-1}(i)}\theta\) are the composite 
	\begin{align*}
	\{1, \dots , n-1\} &\xrightarrow{\uparrow {\theta^{-1}(\sigma^{-1}(i))}}
	\{1, \dots , n\} \setminus \{\theta^{-1}(\sigma^{-1}(i))\}\\
	&\xrightarrow{\theta} \{1, \dots , n\} \setminus \{\sigma^{-1}(i)\}\\
	&\xrightarrow{\downarrow{\sigma^{-1}(i)}}  \{1, \dots , n-1\}\\
	&\xrightarrow{\uparrow{\sigma^{-1}(i)}}  \{1, \dots , n\} \setminus \{\sigma^{-1}(i)\}\\
	&\xrightarrow{\sigma} \{1, \dots , n\} \setminus \{i\}\\ &\xrightarrow{\downarrow i}  \{1, \dots , n-1\}. 
	\end{align*}
	
	(3) By (1) and (2), \(d^k_i\theta \cdot d^k_{\theta^{-1}(i)}\theta^{-1} = d^k_i(\theta\cdot \theta^{-1}) = d^k_iid = id\).
\end{proof}

\subsection*{Permutations and face conditions} 

Proposition \ref{existstheta} below guarantees the existence of a permutation with two compatible predefined faces. The proof requires the following lemma:

\begin{lem} \label{dequal}
	Consider permutations \(\sigma, \theta \in S_n\) \((n\geq 1)\), and let \(i \in \{1, \dots, n\}\) such that \(d^0_i\sigma = d^0_i\theta\) and \(\sigma^{-1}(i) = \theta^{-1}(i)\). Then \(\sigma = \theta\).
\end{lem}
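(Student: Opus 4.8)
The plan is to compare $\sigma$ and $\theta$ value by value, exploiting that the face map $d^0_i$ merely deletes the position carrying the value $i$ and then relabels the surviving values by $\downarrow i$. Write $p = \sigma^{-1}(i) = \theta^{-1}(i)$ for the common position of the value $i$. At this one position there is nothing to prove, since $\sigma(p) = i = \theta(p)$ by definition of $p$; the task is to recover the remaining values of $\sigma$ and $\theta$ from their common face $d^0_i\sigma = d^0_i\theta$.

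First I would read off from the explicit formula for $d^k_i$ that, for an argument $j \in \{1, \dots, n-1\}$, one has $d^0_i\theta(j) = \theta(j)^{\downarrow i}$ when $j < p$ and $d^0_i\theta(j) = \theta(j+1)^{\downarrow i}$ when $j \geq p$, and likewise for $\sigma$. Fixing a position $m \neq p$ in $\{1, \dots, n\}$, I would then feed the appropriate argument into the equation $d^0_i\sigma = d^0_i\theta$: taking $j = m$ in the case $m < p$ and $j = m-1$ in the case $m > p$. In either case the index shift is arranged so that both sides reduce to $\sigma(m)^{\downarrow i}$ on the left and $\theta(m)^{\downarrow i}$ on the right, yielding $\sigma(m)^{\downarrow i} = \theta(m)^{\downarrow i}$.

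It remains to cancel the $\downarrow i$. Since $m \neq p$, neither $\sigma(m)$ nor $\theta(m)$ equals $i$, so both lie in the range on which $\downarrow i$ is injective; concretely, applying $\uparrow i$ and using the identity $m^{\downarrow i \uparrow i} = m$ valid for $m \neq i$ (recorded just after the definition of these maps) gives $\sigma(m) = \theta(m)$. Together with $\sigma(p) = \theta(p)$ this proves $\sigma = \theta$. The only point demanding care is the bookkeeping of the index shift at $j = m-1$ versus $j = m$ across the deleted position $p$; once that is set up correctly, the injectivity of $\downarrow i$ away from $i$ does the rest, so I do not anticipate a genuine obstacle.
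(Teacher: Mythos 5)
Your proof is correct and is essentially the paper's own argument: both identify the common position $p = \sigma^{-1}(i) = \theta^{-1}(i)$, deduce $\sigma(m)^{\downarrow i} = \theta(m)^{\downarrow i}$ for $m \neq p$ from the equality of faces, and cancel $\downarrow i$ via the identity $m^{\downarrow i \uparrow i} = m$ for $m \neq i$. The only difference is presentational — you work through the index-shift bookkeeping explicitly where the paper reads it off from one-line notation.
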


\begin{proof} 
	By definition, 
	\begin{align*}
	d^0_i\sigma &= (\sigma (1)^{\downarrow i} \;\; \cdots \;\;  \sigma(\sigma^{-1}(i)-1)^{\downarrow i}\;\; \sigma(\sigma^{-1}(i)+1)^{\downarrow i} \;\; \cdots\;\;  \sigma(n)^{\downarrow i})
	\end{align*}	
	and 
	\begin{align*}
	d^0_i\theta &= (\theta(1)^{\downarrow i}  \;\; \cdots \;\;  \theta(\theta^{-1}(i)-1)^{\downarrow i}\;\; \theta(\theta^{-1}(i)+1)^{\downarrow i} \;\; \cdots\;\;  \theta(n)^{\downarrow i}).
	\end{align*}
	Since \(d^0_i\sigma = d^0_i\theta\) and \(\sigma^{-1}(i) = \theta^{-1}(i)\), we have \(\sigma (j)^{\downarrow i} = \theta (j)^{\downarrow i}\) for all \(j \not= \sigma^{-1}(i) = \theta^{-1}(i)\). For these \(j\), \(\sigma(j) \not= i \not= \theta(j)\) and therefore \(\sigma(j) = \sigma (j)^{\downarrow i \uparrow i} = \theta(j)^{\downarrow i \uparrow i} = \theta(j)\). Since \(\sigma(\sigma^{-1}(i)) = i = \theta(\theta^{-1}(i))\), we have \(\sigma(j) = \theta(j)\) for all \(j\in \{1, \dots, n\}\).   
\end{proof}

\begin{prop} \label{existstheta}
	Consider permutations \(\alpha, \beta \in S_n\) \((n\geq 1)\), and let \(r \leq s\) be integers such that \(d^0_r\alpha = d^0_s\beta\). Then there exists a permutation \(\theta \in S_{n+1}\) such that \(d^0_r\theta = \beta\) and \(d^0_{s+1}\theta = \alpha\). If \(\alpha^{-1}(r) \leq \beta^{-1}(s)\), \(\theta\) may be chosen such that \(\theta^{-1}(r) < \theta^{-1}(s+1)\). If \(\alpha^{-1}(r) \geq \beta^{-1}(s)\), \(\theta\) may be chosen such that \(\theta^{-1}(r) > \theta^{-1}(s+1)\).   
\end{prop}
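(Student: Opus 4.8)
The plan is to construct $\theta$ explicitly by inserting the value $r$ into a renormalization of $\beta$ at a carefully chosen position, and then to verify the second face condition $d^0_{s+1}\theta = \alpha$ by means of Lemma \ref{dequal} rather than by a brute-force comparison of one-line notations. The guiding observation is that, since $S$ is a precubical set (so that $d^0_r d^0_{s+1} = d^0_s d^0_r$ because $r < s+1$), any $\theta$ with $d^0_r\theta = \beta$ automatically satisfies $d^0_r(d^0_{s+1}\theta) = d^0_s(d^0_r\theta) = d^0_s\beta = d^0_r\alpha$. Thus the hypothesis $d^0_r\alpha = d^0_s\beta$ is exactly the compatibility needed for a common filler to exist, and it already forces $d^0_{s+1}\theta$ and $\alpha$ to share their $d^0_r$-face.

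First I would record the insertion description of the first face condition: for each position $p \in \{1, \dots, n+1\}$ there is a unique $\theta \in S_{n+1}$ with $d^0_r\theta = \beta$ and $\theta^{-1}(r) = p$. Existence follows by applying $\uparrow r$ to every value of $\beta$, which reorders the values $\{1, \dots, n+1\} \setminus \{r\}$ as prescribed by $\beta$, and then inserting the value $r$ at position $p$; uniqueness is precisely Lemma \ref{dequal} applied with index $r$. By construction $d^0_r\theta = \beta$ for every choice of $p$, so the whole problem is reduced to choosing $p$ so that the second face comes out right.

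The central step is then to match the second face. By Lemma \ref{dequal} applied to $d^0_{s+1}\theta$ and $\alpha$ with index $r$, it suffices to check that these permutations have equal $d^0_r$-face (free, as above) and that the value $r$ occupies the same position in both, that is, $(d^0_{s+1}\theta)^{-1}(r) = \alpha^{-1}(r)$. Now the position of $r$ in $d^0_{s+1}\theta$ is obtained from its position $p = \theta^{-1}(r)$ in $\theta$ by deleting the entry $s+1$, located at $q := \theta^{-1}(s+1)$, and renormalizing: since $r < s+1$ the value $r$ is unchanged by $\downarrow(s+1)$, and its position becomes $p$ if $p < q$ and $p-1$ if $p > q$. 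I would compute $q$ directly from the insertion: the value $s+1$ of $\theta$ arises from the value $s$ of $\beta$ (because $s \geq r$ gives $s^{\uparrow r} = s+1$), hence sits at position $\beta^{-1}(s)$ before insertion and is displaced to $\beta^{-1}(s)+1$ exactly when $p \leq \beta^{-1}(s)$.

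This splits into two cases that also deliver the order condition. If $\alpha^{-1}(r) \leq \beta^{-1}(s)$, set $p = \alpha^{-1}(r)$; then $p \leq \beta^{-1}(s)$ yields $q = \beta^{-1}(s)+1 > p$, so $(d^0_{s+1}\theta)^{-1}(r) = p = \alpha^{-1}(r)$ and simultaneously $\theta^{-1}(r) = p < q = \theta^{-1}(s+1)$. If $\alpha^{-1}(r) \geq \beta^{-1}(s)$, set $p = \alpha^{-1}(r)+1$; then $p > \beta^{-1}(s)$ yields $q = \beta^{-1}(s) < p$, so $(d^0_{s+1}\theta)^{-1}(r) = p-1 = \alpha^{-1}(r)$ and $\theta^{-1}(r) = p > q = \theta^{-1}(s+1)$. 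In either case Lemma \ref{dequal} gives $d^0_{s+1}\theta = \alpha$, while $d^0_r\theta = \beta$ holds by construction. I expect the main obstacle to be the bookkeeping in this position computation, namely tracking how deletion of the entry $s+1$ shifts the position of $r$ and correctly identifying $q$ as $\beta^{-1}(s)$ or $\beta^{-1}(s)+1$ according to the insertion point; routing the verification through Lemma \ref{dequal} is what keeps this manageable, since the face equality is handed to us by the precubical identity and only the single index $(d^0_{s+1}\theta)^{-1}(r)$ remains to be matched.
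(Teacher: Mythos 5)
Your proof is correct and takes essentially the same route as the paper: construct $\theta$ by inserting the value $r$ into the entrywise-shifted permutation $\beta^{\uparrow r}$ (so that $d^0_r\theta = \beta$ holds by construction), use the precubical identity $d^0_rd^0_{s+1} = d^0_sd^0_r$ to see that $d^0_{s+1}\theta$ and $\alpha$ share their $d^0_r$-face, and then conclude $d^0_{s+1}\theta = \alpha$ from Lemma \ref{dequal} by matching the single position $(d^0_{s+1}\theta)^{-1}(r) = \alpha^{-1}(r)$. The only (harmless) difference is in the case $\alpha^{-1}(r) \geq \beta^{-1}(s)$, where the paper mirrors the construction and inserts $s+1$ into $\alpha^{\uparrow s+1}$ at position $\beta^{-1}(s)$, whereas you keep inserting $r$ into $\beta^{\uparrow r}$ but at position $\alpha^{-1}(r)+1$, which gives a slightly more uniform treatment of the two cases.
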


\begin{proof}
	(i) Suppose that \(\alpha^{-1}(r) \leq \beta^{-1}(s)\). Set
	\begin{align*}
	\theta &= (\beta(1)^{\uparrow r} \;\;\cdots \;\; \beta(\alpha^{-1}(r)-1)^{\uparrow r} \;\; r \;\; \beta(\alpha^{-1}(r))^{\uparrow r} \;\; \cdots\;\; \beta(n)^{\uparrow r}). 
	\end{align*}
	Then \(d^0_{r}\theta = \beta\). Since \(r < s+1\), 
	\[d^0_rd^0_{s+1}\theta = d^0_sd^0_r\theta =  d^0_s\beta = d^0_r\alpha.\]
	Since \(\alpha^{-1}(r) \leq \beta^{-1}(s)\), we have \(\alpha^{-1}(r) = \theta^{-1}(r) < \theta^{-1}(s+1)\) and therefore  
	\begin{align*}
	(d^0_{s+1}\theta)^{-1}(r) &= d^0_{\theta^{-1}(s+1)}\theta^{-1}(r) = \theta^{-1}(r) = \alpha^{-1}(r). 
	\end{align*}
	By Lemma \ref{dequal}, it follows that \(d^0_{s+1}\theta = \alpha\).
	
	(ii) If \(\alpha^{-1}(r) \geq \beta^{-1}(s)\), a similar argument shows that 
	\begin{align*}
	\theta &= (\alpha(1)^{\uparrow s+1} \;\;\cdots \;\; \alpha(\beta^{-1}(s)-1)^{\uparrow s+1} \;\; s+1 \;\; \alpha(\beta^{-1}(s))^{\uparrow s+1} \;\; \cdots\;\; \alpha(n)^{\uparrow s+1})
	\end{align*}
	has the required properties.
\end{proof}

\subsection*{Cubical identities and permutations} 

The cubical identities \(d^k_id^l_j = d^l_{j-1}d^k_i\) of precubical sets can be generalized using permutations. This is done after the next lemma. 

\begin{lem} \label{keylem}
	Consider a permutation \(\theta \in S_n\) \((n \geq 2)\). Let \(1 \leq i < j \leq n\) and \(k,l \in \{0,1\}\). If \(d^l_{\theta(j)}\theta(i) < \theta(j)\), then \(d^l_{\theta(j)}\theta(i) = \theta(i)\) and \(d^k_{\theta(i)}\theta(j-1) = \theta(j) -1\). Else \(d^l_{\theta(j)}\theta(i) = \theta(i) -1\) and \(d^k_{\theta(i)}\theta(j-1) = \theta(j)\).			
\end{lem}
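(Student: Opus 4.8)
The plan is to compute each of the two expressions \(d^l_{\theta(j)}\theta(i)\) and \(d^k_{\theta(i)}\theta(j-1)\) directly from the explicit case formula for the face maps given after the definition of \(d^k_i\theta\), and then simply read off the dichotomy. The guiding observation is that both expressions are instances of the map \(\downarrow\): since \(\theta\) is a bijection we have \(\theta^{-1}(\theta(j)) = j\) and \(\theta^{-1}(\theta(i)) = i\), and the hypothesis \(i < j\) dictates which branch of the piecewise formula is active in each case. Note first that \(k\) and \(l\) play no role whatsoever, since \(d^0 = d^1\) by definition.

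First I would evaluate \(d^l_{\theta(j)}\theta\) at the position \(i\). Here the value being deleted is \(\theta(j)\), which sits at position \(\theta^{-1}(\theta(j)) = j\); since \(i < j\), the position \(i\) lies strictly below \(\theta^{-1}(\theta(j))\), so the first two branches of the formula apply and yield \(d^l_{\theta(j)}\theta(i) = \theta(i)^{\downarrow \theta(j)}\). Similarly, I would evaluate \(d^k_{\theta(i)}\theta\) at the position \(j-1\). Now the deleted value is \(\theta(i)\), at position \(\theta^{-1}(\theta(i)) = i\); since \(i < j\) gives \(j-1 \geq i\), the position \(j-1\) lies at or above \(\theta^{-1}(\theta(i))\), so the last two branches apply, and because \(\theta((j-1)+1) = \theta(j)\) they yield \(d^k_{\theta(i)}\theta(j-1) = \theta(j)^{\downarrow \theta(i)}\).

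It then remains to unwind these two \(\downarrow\)-expressions. Since \(i \neq j\) and \(\theta\) is injective, \(\theta(i) \neq \theta(j)\), so exactly one of \(\theta(i) < \theta(j)\) or \(\theta(i) > \theta(j)\) holds. If \(\theta(i) < \theta(j)\), then \(\theta(i)^{\downarrow \theta(j)} = \theta(i)\), which is \(< \theta(j)\), while \(\theta(j)^{\downarrow \theta(i)} = \theta(j) - 1\). If instead \(\theta(i) > \theta(j)\), then \(\theta(i)^{\downarrow \theta(j)} = \theta(i) - 1 \geq \theta(j)\), while \(\theta(j)^{\downarrow \theta(i)} = \theta(j)\). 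Hence the hypothesis \(d^l_{\theta(j)}\theta(i) < \theta(j)\) holds precisely when \(\theta(i) < \theta(j)\), and in that case \(d^l_{\theta(j)}\theta(i) = \theta(i)\) and \(d^k_{\theta(i)}\theta(j-1) = \theta(j) - 1\); in the complementary case \(d^l_{\theta(j)}\theta(i) = \theta(i) - 1\) and \(d^k_{\theta(i)}\theta(j-1) = \theta(j)\). These are exactly the two assertions of the lemma.

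The only genuine obstacle is notational bookkeeping rather than any real difficulty: in \(d^l_{\theta(j)}\theta(i)\) the subscript \(\theta(j)\) is a \emph{value} to be deleted, whereas the trailing \(i\) is the \emph{position} at which the resulting permutation is read. Consequently, when selecting the branch of the piecewise formula one must compare \(i\) against \(\theta^{-1}(\theta(j)) = j\) — not against \(\theta(j)\) itself — and likewise compare \(j-1\) against \(\theta^{-1}(\theta(i)) = i\). Once this distinction is tracked carefully, the statement follows from the elementary definition of \(\downarrow\) with no further computation.
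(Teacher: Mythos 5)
Your proof is correct and follows essentially the same route as the paper's: both arguments unwind the explicit piecewise formula for the face maps, use \(i < j = \theta^{-1}(\theta(j))\) and \(j-1 \geq i = \theta^{-1}(\theta(i))\) to select the active branches, and resolve the dichotomy by comparing \(\theta(i)\) with \(\theta(j)\). The only cosmetic difference is that you treat both cases at once by showing the hypothesis \(d^l_{\theta(j)}\theta(i) < \theta(j)\) is equivalent to \(\theta(i) < \theta(j)\), whereas the paper proves one case and declares the other analogous.
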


\begin{proof}
	We will suppose that \(d^l_{\theta(j)}\theta(i) < \theta(j)\). The other case is analogous. Since \(i < j = \theta^{-1}(\theta(j))\), we have \(d^l_{\theta(j)}\theta(i) = \theta(i)\) or \(d^l_{\theta(j)}\theta(i) = \theta(i) - 1\). If we had \(d^l_{\theta(j)}\theta(i) = \theta(i) - 1\), we would have \(\theta(i) > \theta(j)\) and thus \(d^l_{\theta(j)}\theta(i) \geq \theta(j)\). Hence \(d^l_{\theta(j)}\theta(i) = \theta(i)\). Since \(j-1 \geq i = \theta^{-1}(\theta(i))\) and \(\theta (j-1 +1) = \theta(j) > d^l_{\theta(j)}\theta(i) = \theta(i)\), we have \(d^k_{\theta(i)}\theta(j-1) = \theta(j) -1\).
\end{proof}

\begin{prop} \label{precubtheta}
	Let \(P\) be a precubical set, and let \(n \geq 2\), \(1 \leq i < j \leq n\), \(k,l \in \{0,1\}\), \(x \in P_n\), and \(\theta \in S_n\). Then
	\begin{enumerate}[(i)]
		\item \(d^k_{d^l_{\theta(j)}\theta (i)}d^l_{\theta(j)}x = d^l_{d^k_{\theta(i)}\theta(j-1)}d^k_{\theta(i)}x\); 
		\item \(d^k_{(d^l_j\theta)^{-1}(i)}d^l_{\theta^{-1}(j)}x = d^l_{(d^k_i\theta)^{-1}(j-1)}d^k_{\theta^{-1}(i)}x\).
	\end{enumerate}
\end{prop}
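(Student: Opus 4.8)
The plan is to establish (i) directly from Lemma \ref{keylem} together with the defining precubical identity \(d^k_p d^l_q = d^l_{q-1} d^k_p\) (valid for \(p < q\)), and then to obtain (ii) from (i) by substituting \(\theta^{-1}\) for \(\theta\).

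For (i), I would first read the two nested subscripts as the integers \(A = d^l_{\theta(j)}\theta(i) = (d^l_{\theta(j)}\theta)(i)\) and \(B = d^k_{\theta(i)}\theta(j-1) = (d^k_{\theta(i)}\theta)(j-1)\), so that the claim becomes \(d^k_A d^l_{\theta(j)} x = d^l_B d^k_{\theta(i)} x\). Lemma \ref{keylem} supplies their values in two coupled cases. If \(d^l_{\theta(j)}\theta(i) < \theta(j)\), then \(A = \theta(i)\) and \(B = \theta(j) - 1\), and here \(A = \theta(i) < \theta(j)\); the left-hand side \(d^k_{\theta(i)} d^l_{\theta(j)} x\) then rewrites by the precubical identity to \(d^l_{\theta(j)-1} d^k_{\theta(i)} x\), which is the right-hand side. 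Otherwise \(A = \theta(i) - 1\) and \(B = \theta(j)\), and now \(A = \theta(i) - 1 \geq \theta(j)\) forces \(\theta(j) < \theta(i)\); this time I would rewrite the right-hand side, using the identity in the form \(d^l_{\theta(j)} d^k_{\theta(i)} x = d^k_{\theta(i)-1} d^l_{\theta(j)} x\), and recognize it as the left-hand side.

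The one point demanding care—and the step I expect to be the main obstacle—is purely bookkeeping in (i): one must decide, in each of the two cases, which side of the asserted equality to rewrite so that the inner face map carries the smaller of the two values \(\theta(i), \theta(j)\), since the precubical identity applies only when the outer index is below the inner one. Once the case split is phrased in terms of the sign of \(\theta(i) - \theta(j)\) and the values from Lemma \ref{keylem} are plugged in, the rest is mechanical.

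For (ii), I would simply invoke (i) with \(\theta\) replaced by \(\theta^{-1}\). Its inner subscripts then become \((d^l_{\theta^{-1}(j)}\theta^{-1})(i)\) and \((d^k_{\theta^{-1}(i)}\theta^{-1})(j-1)\), and Proposition \ref{dsigmatheta}(3), which gives \((d^l_j\theta)^{-1} = d^l_{\theta^{-1}(j)}\theta^{-1}\) and \((d^k_i\theta)^{-1} = d^k_{\theta^{-1}(i)}\theta^{-1}\), rewrites them as \((d^l_j\theta)^{-1}(i)\) and \((d^k_i\theta)^{-1}(j-1)\). The outer subscripts \(\theta^{-1}(j)\) and \(\theta^{-1}(i)\) are already in the form required by (ii), so after this substitution (i) is literally the statement of (ii), and no further computation is needed.
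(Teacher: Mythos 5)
Your proposal is correct and follows essentially the same route as the paper: case-split on whether \(d^l_{\theta(j)}\theta(i) < \theta(j)\), use Lemma \ref{keylem} to identify the nested indices, apply the precubical identity to the side whose outer index is the smaller one, and then deduce (ii) from (i) by substituting \(\theta^{-1}\) and invoking Proposition \ref{dsigmatheta}(3). No gaps.
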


\begin{proof}
	(i) If \(d^l_{\theta(j)}\theta(i) < \theta(j)\), then, by Lemma \ref{keylem}, \[d^k_{d^l_{\theta(j)}\theta (i)}d^l_{\theta(j)}x = d^l_{\theta(j) -1}d^k_{d^l_{\theta(j)}\theta (i)}x = d^l_{d^k_{\theta(i)}\theta(j-1)}d^k_{\theta(i)}x.\]
	If \(d^l_{\theta(j)}\theta(i) \geq \theta(j)\), then, again by Lemma \ref{keylem}, 
	\[d^l_{d^k_{\theta(i)}\theta(j-1)}d^k_{\theta(i)}x = d^l_{\theta(j)}d^k_{d^l_{\theta(j)}\theta(i)+1}x = d^k_{d^l_{\theta(j)}\theta (i)}d^l_{\theta(j)}x.\]
	
	(ii) By Proposition \ref{dsigmatheta} and (i), \(d^k_{(d^l_j\theta)^{-1}(i)}d^l_{\theta^{-1}(j)}x = d^k_{d^l_{\theta^{-1}(j)}\theta^{-1}(i)}d^l_{\theta^{-1}(j)}x = d^l_{d^k_{\theta^{-1}(i)}\theta^{-1}(j-1)}d^k_{\theta^{-1}(i)}x = d^l_{(d^k_i\theta)^{-1}(j-1)}d^k_{\theta^{-1}(i)}x\).	
\end{proof}

\begin{rem}
	By Proposition \ref{precubtheta}, the graded set \(S\) can be given a second structure of precubical set where the face maps are defined by \(\partial^k_i\theta = d^k_{\theta(i)}\theta\). By Proposition \ref{dsigmatheta}, the map \(\theta \mapsto \theta^{-1}\) is an isomorphism between the two precubical sets of permutations. 
\end{rem}

\section{Symmetric precubical sets and HDAs}

A symmetric HDA is an HDA with symmetric underlying precubical set.
Symmetric precubical sets are usually defined as presheaves on a suitable category of cubes (see, e.g., \cite{GaucherCombinatorics, GoubaultMimram}). Here we define them equivalently as precubical sets with a crossed action of the precubical set \(S\).  We also define free symmetric precubical sets and HDAs, which are central to our work in the following sections.

\subsection*{Crossed actions} A \emph{crossed action} of \(S\) on a precubical set \(P\) is a morphism of graded sets \(S \times P \xrightarrow{} P\), \((\theta, x) \mapsto \theta\cdot x\) satisfying the following three conditions:
\begin{enumerate}
	\item For all \(n \geq 0\) and \(x \in P_n\), \[id\cdot x = x.\]
	\item For all \(n \geq 0\), \(\sigma, \theta \in S_n\), and \(x \in P_n\), \[(\sigma \cdot\theta)\cdot x = \sigma \cdot(\theta \cdot x).\]
	\item For all \(n \geq 1\), \(\theta \in S_n\), \(x \in P_n\), \(i \in \{1, \dots, n\}\), and \(k \in \{0,1\}\),  \[d^k_i(\theta \cdot x) = d^k_i\theta\cdot d^k_{\theta^{-1}(i)}x.\]
\end{enumerate} 
For example, the multiplication \(S \times S \to S\) is a crossed action of \(S\) on itself. 

\subsection*{Symmetric precubical sets} A \emph{symmetric precubical set} is a precubical set \(P\) equipped with a crossed action of \(S\) on \(P\). For example, \(S\) is a symmetric precubical set with respect to the multiplication \(S \times S \to S\). Symmetric precubical sets form a category, in which the morphisms are morphisms of precubical sets that are compatible with the crossed actions. We remark that the category of symmetric precubical sets is isomorphic to the presheaf category \(\mathsf{Set}^{\square_S^{\mathsf{op}}}\) where \(\square_S\) is the subcategory of the category of topological spaces whose objects are the standard \(n\)-cubes \([0,1]^n\) \((n \geq 0)\) and whose morphisms are composites of the coface maps \(\delta^k_i\) defined in Section \ref{SecPrel} and the \emph{permutation maps}  \(t_\theta \colon [0,1]^n \to [0,1]^n\) (\(n \geq 0\), \(\theta \in S_n\)) given by \(t_\theta (u_1, \dots , u_n) = (u_{\theta(1)} \dots, u_{\theta(n)})\). 

\subsection*{Free symmetric precubical sets} 

Let \(P\) be a precubical set. The \emph{free symmetric precubical set} generated by \(P\) is the symmetric precubical set \(SP\) defined by 
\begin{itemize}
	\item \((SP)_n = S_n \times P_n\) \((n \geq 0)\);
	\item \(d^k_i(\theta,x) = (d^k_i\theta, d^k_{\theta^{-1}(i)}x)\) \(({n \geq 1}, {\theta \in S_n}, {x \in P_n}, {1 \leq i \leq n}, {k \in \{0,1\}})\); 
	\item \(\sigma \cdot (\theta, x) = (\sigma \cdot \theta, x)\) \((n \geq 0, \sigma, \theta \in S_n, x \in P_n)\).
\end{itemize}
It follows from Propositions \ref{dsigmatheta} and \ref{precubtheta} that \(SP\) is indeed a symmetric precubical set.
The free symmetric precubical set is functorial. Given a morphism of precubical set \(f \colon P \to Q\), \(Sf \colon SP \to SQ\) is the graded map \(id_S\times f\).

\begin{prop}
	The functor \(S\) from the category of precubical sets to the category of symmetric precubical sets is left adjoint to the forgetful functor.
\end{prop}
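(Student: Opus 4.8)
The plan is to exhibit the standard adjunction unit and verify the universal property directly, which for a free-construction claim of this kind is the cleanest route. First I would define, for each precubical set $P$, the unit morphism $\eta_P \colon P \to SP$ of (the underlying precubical set of) $SP$ by $\eta_P(x) = (id, x)$ for $x \in P_n$. I must check that $\eta_P$ is a morphism of precubical sets: compatibility with face maps follows immediately from the definition of the faces in $SP$ together with Proposition \ref{dsigmatheta}(1), since $d^k_i(id, x) = (d^k_i id, d^k_{id^{-1}(i)}x) = (id, d^k_i x) = \eta_P(d^k_i x)$.

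Next I would establish the universal property. Given a symmetric precubical set $Q$ (with crossed action written $(\theta, y) \mapsto \theta \cdot y$) and a morphism of precubical sets $g \colon P \to UQ$ into the underlying precubical set, I want a unique morphism of symmetric precubical sets $\bar g \colon SP \to Q$ with $\bar g \circ \eta_P = g$. The only possible definition, forced by the factorization and $SP$-equivariance, is $\bar g(\theta, x) = \theta \cdot g(x)$, since $(\theta, x) = \theta \cdot (id, x) = \theta \cdot \eta_P(x)$ in $SP$ and any equivariant extension must satisfy $\bar g(\theta, x) = \theta \cdot \bar g(\eta_P(x)) = \theta \cdot g(x)$. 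This simultaneously proves uniqueness, so the real work is verifying that this $\bar g$ is well-defined as a morphism of symmetric precubical sets.

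The main obstacle, and the only genuinely computational step, is checking that $\bar g$ commutes with the face maps; equivariance with respect to the $S$-action is then almost immediate from associativity (crossed action axiom (2)). For the faces I would compute
\begin{align*}
\bar g(d^k_i(\theta, x)) &= \bar g(d^k_i\theta, d^k_{\theta^{-1}(i)}x) = d^k_i\theta \cdot g(d^k_{\theta^{-1}(i)}x) = d^k_i\theta \cdot d^k_{\theta^{-1}(i)}g(x),
\end{align*}
using that $g$ is a morphism of precubical sets in the last equality, and compare this with $d^k_i(\bar g(\theta, x)) = d^k_i(\theta \cdot g(x))$. The two agree precisely by crossed action axiom (3) applied in $Q$, namely $d^k_i(\theta \cdot g(x)) = d^k_i\theta \cdot d^k_{\theta^{-1}(i)}g(x)$. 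Finally, naturality of the resulting bijection between $\mathrm{Hom}_{\mathrm{sym}}(SP, Q)$ and $\mathrm{Hom}(P, UQ)$ in both variables is a routine diagram chase that I would either carry out briefly or leave to the reader, completing the proof that $S$ is left adjoint to the forgetful functor $U$.
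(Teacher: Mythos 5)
Your proof is correct and follows essentially the same route as the paper: the paper's proof exhibits the hom-set bijection via $\hat f(\theta,x) = \theta \cdot f(x)$ and $\check g(x) = g(id,x)$, which are exactly your $\bar g$ and the composite with your unit $\eta_P$. Your version, phrased through the universal property, is in fact more thorough than the paper's, since it explicitly verifies that $\bar g$ respects the face maps (via crossed action axiom (3)) and establishes uniqueness, both of which the paper leaves implicit.
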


\begin{proof}
	Let \(P\) be a precubical set, and let \(Z\) be a symmetric precubical set. The adjunct of a morphism of precubical sets \(f \colon P \to Z\) is the morphism of symmetric precubical sets \(\hat f \colon SP \to Z\) given by \(\hat f(\theta ,x) = \theta\cdot f(x)\). The adjunct of a morphism of symmetric precubical sets \(g \colon SP \to Z\) is the morphism of precubical sets \(\check{g} \colon P \to Z\) given by \(\check{g}(x) = g(id,x)\).
\end{proof}

Recall that the \(i\)th starting edge of a cube \(x\) of degree \(n \geq 1\) of a precubical set is the edge \(e_ix = d_1^{0} \cdots d_{i-1}^ {0}d_{i+1}^{0}\cdots d_n^{0}x\). The starting edges of cubes in a free symmetric precubical set are related as follows to the starting edges of cubes in the generating precubical set:
 
\begin{prop} \label{edgetheta}
	Let \(P\) be a precubical set, and let \((\theta,x) \in (SP)_n\) \((n \geq 1)\). Then for each \(i \in \{1, \dots, n\}\), \(e_i(\theta, x) = (id, e_{\theta^{-1}(i)}x)\).
\end{prop}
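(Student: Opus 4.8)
The plan is to induct on \(n\), using the fact that \((SP)_1 = S_1 \times P_1\) and that \(S_1\) is the trivial group. A starting edge is a \(1\)-cube, so its first coordinate automatically lies in \(S_1 = \{id\}\); hence the first coordinate of \(e_i(\theta,x)\) is necessarily \(id\), and the only real content is that the second coordinate equals \(e_{\theta^{-1}(i)}x\). For \(n = 1\) there is nothing to collapse: \(e_1(\theta,x) = (\theta,x) = (id,x) = (id,e_1x)\), since \(S_1\) is trivial and \(e_1x = x\).

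For the inductive step I would peel off the first face map applied in the composite \(e_i(\theta,x) = d^0_1\cdots d^0_{i-1}d^0_{i+1}\cdots d^0_n(\theta,x)\), thereby reducing to a cube of degree \(n-1\) in \(SP\) to which the induction hypothesis applies. Concretely, if \(i < n\), then \(e_i(\theta,x) = e_i\big(d^0_n(\theta,x)\big) = e_i\big(d^0_n\theta,\, d^0_{\theta^{-1}(n)}x\big)\), where on the right \(e_i\) is the starting-edge operator for \((n-1)\)-cubes; if \(i = n\), then instead \(e_n(\theta,x) = e_{n-1}\big(d^0_{n-1}\theta,\, d^0_{\theta^{-1}(n-1)}x\big)\). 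In either case the induction hypothesis yields \(e_i(\theta,x) = \big(id,\, e_q(d^0_p x)\big)\), with \(p = \theta^{-1}(n)\) and \(q = (d^0_n\theta)^{-1}(i)\) in the first case, and \(p = \theta^{-1}(n-1)\) and \(q = (d^0_{n-1}\theta)^{-1}(n-1)\) in the second.

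It then remains to identify \(e_q(d^0_p x)\) with \(e_{\theta^{-1}(i)}x\). For this I would first establish the purely precubical identity \(e_q(d^0_p x) = e_{q^{\uparrow p}}x\) for any \(x \in P_n\), \(p \in \{1,\dots,n\}\), and \(q \in \{1,\dots,n-1\}\): both sides are the \(1\)-cube of \(P\) obtained by taking the \(0\)-face in every direction of \(x\) except the direction \(q^{\uparrow p}\), and the equality is extracted from the cubical relations \(d^k_id^l_j = d^l_{j-1}d^k_i\) by reordering the \(0\)-faces. Granting this, it suffices to check \(q^{\uparrow p} = \theta^{-1}(i)\). Rewriting \((d^0_n\theta)^{-1} = d^0_{\theta^{-1}(n)}\theta^{-1}\) (resp. \((d^0_{n-1}\theta)^{-1} = d^0_{\theta^{-1}(n-1)}\theta^{-1}\)) by Proposition~\ref{dsigmatheta}(3) and feeding in the explicit description of the face maps of \(S\), one finds in both cases that \(q = \theta^{-1}(i)^{\downarrow p}\); since \(i \neq n\) (resp. \(i = n \neq n-1\)) forces \(\theta^{-1}(i) \neq p\), the identity \(m^{\downarrow p\uparrow p} = m\) for \(m \neq p\) gives \(q^{\uparrow p} = \theta^{-1}(i)\), as required.

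I expect the main obstacle to be the index bookkeeping: keeping straight which face index is being removed from a permutation versus which position it occupies, and in particular the auxiliary precubical lemma \(e_q(d^0_p x) = e_{q^{\uparrow p}}x\), whose verification requires careful use of the cubical identities together with their index shifts. Once that is in place, the permutation arithmetic is a short computation with \(\downarrow\) and \(\uparrow\), guided by Proposition~\ref{dsigmatheta}.
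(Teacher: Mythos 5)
Your proposal is correct and follows essentially the same route as the paper's own proof: induction on \(n\), peeling off \(d^0_n\) when \(i<n\) and \(d^0_{n-1}\) when \(i=n\), applying the inductive hypothesis, and then identifying the resulting index via Proposition \ref{dsigmatheta}(3) together with the precubical relation between starting edges and \(0\)-faces. The only differences are presentational: you package the paper's two-case index computations into the single identity \(e_q(d^0_p x)=e_{q^{\uparrow p}}x\) with \(q=\theta^{-1}(i)^{\downarrow p}\), and you absorb the paper's separate \(n=2\) base case into the general inductive step.
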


\begin{proof}
	We proceed by induction. For \(n = 1\), there is nothing to show. If \(n = 2\), 
	\[e_i(\theta,x) = d^0_{3-i}(\theta,x) = (d^0_{3-i}\theta, d^0_{\theta^{-1}(3-i)}x) = (id, e_{\theta^{-1}(i)}x).\]
	Suppose that \(n > 2\). Consider first the case \(i < n\). By the inductive hypothesis,
	\[e_i(\theta,x) = e_id^0_n(\theta, x) = e_i(d^0_n\theta, d^0_{\theta^{-1}(n)}x) = (id, e_{(d^0_n\theta)^{-1}(i)}d^0_{\theta^{-1}(n)}x).\]
	Since \(i<n\),
	\[(d^0_n\theta)^{-1}(i) = d^0_{\theta^{-1}(n)}\theta^{-1}(i) = \left \{ \begin{array}{ll}
	\theta^{-1}(i), & \theta^{-1}(i) < \theta^{-1}(n),\\
	\theta^{-1}(i)-1, & \theta^{-1}(i) > \theta^{-1}(n).
	\end{array}\right. 
	\]
	Hence 
	\[e_{(d^0_n\theta)^{-1}(i)}d^0_{\theta^{-1}(n)}x = \left \{ \begin{array}{ll}
	e_{\theta^{-1}(i)}d^0_{\theta^{-1}(n)}x = e_{\theta^{-1}(i)}x, & \theta^{-1}(i) < \theta^{-1}(n),\\
	e_{\theta^{-1}(i)-1}d^0_{\theta^{-1}(n)}x = e_{\theta^{-1}(i)}x, & \theta^{-1}(i) > \theta^{-1}(n)
	\end{array}\right.
	\]
	and therefore \(e_i(\theta, x) = (id, e_{\theta^{-1}(i)}x)\).
	
	Suppose now that \(i = n\). By the inductive hypothesis,
	\begin{align*}
		e_n(\theta,x) &= e_{n-1}d^0_{n-1}(\theta, x)\\ &= e_{n-1}(d^0_{n-1}\theta, d^0_{\theta^{-1}(n-1)}x)\\ &= (id, e_{(d^0_{n-1}\theta)^{-1}(n-1)}d^0_{\theta^{-1}(n-1)}x).
	\end{align*}
	We have 
	\begin{align*}
		(d^0_{n-1}\theta)^{-1}(n-1) &= d^0_{\theta^{-1}(n-1)}\theta^{-1}(n-1)\\ &= \left \{ \begin{array}{ll}
		\theta^{-1}(n), & \theta^{-1}(n) < \theta^{-1}(n-1),\\
		\theta^{-1}(n)-1, & \theta^{-1}(n) > \theta^{-1}(n-1).
		\end{array}\right.
	\end{align*}
	Hence 
	\begin{align*}
		e_{(d^0_{n-1}\theta)^{-1}(n-1)}d^0_{\theta^{-1}(n-1)}x &= \left \{ \begin{array}{ll}
		e_{\theta^{-1}(n)}d^0_{\theta^{-1}(n-1)}x, & \theta^{-1}(n) < \theta^{-1}(n-1),\\
		e_{\theta^{-1}(n)-1}d^0_{\theta^{-1}(n-1)}x, & \theta^{-1}(n) > \theta^{-1}(n-1)
		\end{array}\right.\\
		&= e_{\theta^{-1}(n)}x
	\end{align*}
	and therefore \(e_n(\theta, x) = (id, e_{\theta^{-1}(n)}x)\).
\end{proof}

\subsection*{Symmetric HDAs} A \emph{symmetric HDA} is an HDA \(\Q\) equipped with a crossed action of \(S\) on \(P_\Q\). Symmetric HDAs form a category, in which the morphisms are morphisms of HDAs that also are  morphisms of symmetric precubical sets. The \emph{free symmetric HDA} generated by an HDA \(\Q\) is the symmetric HDA \(S\Q\) where \(P_{S\Q} = SP_\Q\), \(I_{S\Q} = (id,I_\Q)\), \(F_{S\Q} = S_0\times F_\Q\), \(\lambda_{S\Q}(id,x) = \lambda_\Q(x)\) \((x \in (P_\Q)_1)\), and the crossed action is the one of \(SP_\Q\). The assignment \(\Q \mapsto S\Q\) defines a functor from the category of HDAs to the category of symmetric HDAs, which is left adjoint to the forgetful functor.

\section{Cube paths} \label{seccubepaths}

Throughout this section, let \(\Q\) denote an HDA. A \emph{cube path} in \(\Q\) is a sequence of cubes and face maps
\[\pi = x_0 \tfrac{d^{k_1}_{i_1}}{} x_1 \tfrac{d^{k_2}_{i_2}}{}  x_2 \tfrac{d^{k_3}_{i_3}}{} \cdots \tfrac{d^{k_m}_{i_m}}{} x_m\]
such that \(x_0 = I_\Q\) and for all \(j \in \{1, \dots, m\}\),  \(2\sum \limits_{p=1}^j k_p \leq j\), \(x_j \in (P_\Q)_{j-2\sum \limits_{p=1}^j k_p}\), 
\(i_j \in \{1, \dots, j-k_j - 2\sum \limits_{p=1}^{j-k_j} k_p\}\), 
and \(d^{k_j}_{i_j}x_{j-k_j} = x_{j-1 + k_j}\) (cf. \cite{vanGlabbeek}). We refer to \(m\) as the \emph{length} of \(\pi\) and  write \(\mathit{end}(\pi)\) for \(x_m\). We write \(\pi \to \pi'\) if \(\pi\) extends to a cube path \(\pi'\). A cube path in \(\Q\) represents a partial execution of the concurrent system modeled by \(\Q\). An example of a cube path is indicated by the thick arrows in the following very simple HDA:
\begin{center}
	\begin{tikzpicture}[initial text={},on grid]  
	\path[draw, fill=lightgray] (0,0)--(2,0)--(2,2)--(0,2)--cycle;
	\node[state,minimum size=0pt,inner sep =2pt,fill=white] (p_0)   {}; 
	\node[state,accepting,minimum size=0pt,inner sep =2pt,fill=white] (p_2) [right=of p_0,xshift=1cm] {};
	\node[state,initial,initial distance=0.2cm,minimum size=0pt,inner sep =2pt,fill=white,initial where=above,initial distance=0.2cm] [above=of p_0, yshift=1cm] (p_3)   {};
	\node[state,minimum size=0pt,inner sep =2pt,fill=white] (p_5) [right=of p_3,xshift=1cm] {}; 
	\path[->] 
	(p_0) edge[below] node {$a$} (p_2)
	(p_3) edge[above]  node {$a$} (p_5)
	(p_3) edge[left]  node {$b$} (p_0)
	(p_5) edge[right]  node {$b$} (p_2);
	\path[->]
	(p_3) edge[above, very thick]  (1,2)
	(1,2) edge[above, very thick]  (1,1)
	(1,1) edge[above, very thick]  (2,1);	
	\end{tikzpicture}
\end{center}
If the 2-cube in this HDA is \(x\), with \(d^0_2x\) the upper horizontal edge  and \(d^1_1x\) the right vertical edge, the depicted cube path is 
\[d^0_1d^0_2x \tfrac{d^0_1}{} d^0_2x \tfrac{d^0_2}{} x \tfrac{d^1_1}{} d^1_1x.\]
The \emph{split trace} of a cube path  
\(\pi = x_0 \tfrac{d^{k_1}_{i_1}}{} x_1 \tfrac{d^{k_2}_{i_2}}{}  x_2 \tfrac{d^{k_3}_{i_3}}{} \cdots \tfrac{d^{k_m}_{i_m}}{} x_m\)
is the sequence \[\mathit{split\mbox{-}trace}(\pi) = ((\lambda_\Q(e_{i_1}x_{1-k_1}),k_1), \dots, (\lambda_\Q(e_{i_m}x_{m-k_m}),k_m))\] 
(see \cite{vanGlabbeek}). The split trace is the sequence of actions starting (with second component \(0\)) and terminating (with second component \(1\)) along the cube path. The split trace of the cube path in the example above is the sequence
\[((a,0),(b,0),(a,1)).\] 
This cube path describes thus an (incomplete) execution of the system modeled by the HDA where first \(a\) starts, then \(b\) starts, and finally \(a\) terminates.

\begin{defin} \cite{vanGlabbeek} \label{adjacent}
	Two cube paths  \[{\pi = x_0 \tfrac{d^{k_1}_{i_1}}{} x_1 \tfrac{d^{k_2}_{i_2}}{}  \cdots \tfrac{d^{k_m}_{i_m}}{} x_m}\quad  \mbox{and} \quad {\gamma = y_0 \tfrac{d^{q_1}_{r_1}}{} y_1 \tfrac{d^{q_2}_{r_2}}{}   \cdots \tfrac{d^{q_m}_{r_m}}{} y_m}\] of the same length \(m\geq 2\) are said to be \emph{\(\ell\)-adjacent} \((1 \leq \ell < m)\), denoted \(\pi \xleftrightarrow{\ell} \gamma\), if \(x_j = y_j\) for all \(j \not= \ell\), \(k_j = q_j\) and \(i_j = r_j\) for all \(j \not= \ell, \ell+1\), and one of the following conditions holds:
	\begin{itemize}
		\item[(i)] \(k_\ell = q_{\ell +1} = k_{\ell +1} = q_\ell = 0\) and \(i_\ell = r_{\ell+1}  < i_{\ell+1} = r_\ell+1\) 
		\item[(ii)] \(q_\ell = k_{\ell +1} = q_{\ell +1} = k_\ell = 0\) and \(r_\ell = i_{\ell+1}  < r_{\ell+1} = i_\ell+1\)
		\item[(iii)] \(k_\ell = q_{\ell +1} = 0\), \(k_{\ell +1} = q_\ell = 1\), and \(i_\ell = r_{\ell+1}  < i_{\ell+1} = r_\ell+1\)
		\item[(iv)] \(q_\ell = k_{\ell +1} = 0\), \(q_{\ell +1} = k_\ell = 1\), and \(r_\ell = i_{\ell+1}  < r_{\ell+1} = i_\ell+1\) 
		\item[(v)] \(k_\ell = q_{\ell +1} = 0\), \(k_{\ell +1} = q_\ell = 1\), and \(i_\ell = r_{\ell+1}+1  > i_{\ell+1} = r_\ell\)
		\item[(vi)] \(q_\ell = k_{\ell +1} = 0\), \(q_{\ell +1} = k_\ell = 1\), and \(r_\ell = i_{\ell+1}+1  > r_{\ell+1} = i_\ell\)
		\item[(vii)] \(k_\ell = q_{\ell +1} = k_{\ell +1} = q_\ell = 1\) and \(i_\ell = r_{\ell+1}+1  > i_{\ell+1} = r_\ell\)
		\item[(viii)] \(q_\ell = k_{\ell +1} = q_{\ell +1} = k_\ell = 1\) and \(r_\ell = i_{\ell+1}+1  > r_{\ell+1} = i_\ell\) 	
	\end{itemize}
\end{defin} 

For example, the cube path considered above and the cube path \[d^0_1d^0_2x \tfrac{d^0_1}{} d^0_2x \tfrac{d^1_1}{} d^1_1d^0_2x \tfrac{d^0_1}{} d^1_1x\]
with split trace \(((a,0),(a,1),(b,0))\) are \(2\)-adjacent, satisfying condition (v).

\subsection*{Cube paths in \texorpdfstring{\(S\Q\)}{} and the map \texorpdfstring{\(\phi\)}{}}

With a cube path \[\pi = (\theta_0,x_0) \tfrac{d^{k_1}_{i_1}}{} (\theta_1,x_1) \tfrac{d^{k_2}_{i_2}}{}  \cdots \tfrac{d^{k_m}_{i_m}}{} (\theta_m,x_m)\] in \(S\Q\), we associate the cube path \[\phi(\pi) =  x_0 \tfrac{d^{k_1}_{\theta_{1-k_1}^{-1}(i_1)}}{} x_1 \tfrac{d^{k_2}_{\theta_{2-k_2}^{-1}(i_2)}}{}  \cdots \tfrac{d^{k_m}_{\theta_{m-k_m}^{-1}(i_m)}}{} x_m\] in \(\Q\). Since \((\theta_0,x_0) = I_{S\Q} = (id, I_\Q)\), we have \(x_0 = I_\Q\). Since for all \(j \in \{1, \dots, m\}\),  \(2\sum \limits_{p=1}^j k_p \leq j\), \((\theta_j,x_j) \in (P_{S\Q})_{j-2\sum \limits_{p=1}^j k_p} = S_{j-2\sum \limits_{p=1}^j k_p}\times (P_\Q)_{j-2\sum \limits_{p=1}^j k_p}\), 
\(i_j\in \{1, \dots, j-k_j - 2\sum \limits_{p=1}^{j-k_j} k_p\}\), and 
\[(d^{k_j}_{i_j}\theta_{j-k_j}, d^{k_j}_{\theta_{j-k_j}^{-1}(i_j)}x_{j-k_j}) =  d^{k_j}_{i_j}(\theta_ {j-k_j},x_{j-k_j}) = (\theta_{j-1+k_j},x_{j-1+k_j}),\] we also have  for all \(j \in \{1, \dots, m\}\), \(x_j \in (P_\Q)_{j-2\sum \limits_{p=1}^j k_p}\), \(\theta_{j-k_j} \in S_{j-k_j - 2\sum \limits_{p=1}^{j-k_j} k_p}\), \(\theta^{-1}_{j-k_j}(i_j) \in \{1, \dots, j-k_j - 2\sum \limits_{p=1}^{j-k_j} k_p\}\), and  \(d^{k_j}_{\theta_{j-k_j}^{-1}(i_j)}x_{j-k_j} = x_{j-1+k_j}\). Hence \(\phi(\pi)\) is indeed a cube path in \(\Q\).

The map \(\phi\) preserves adjacency:

\begin{prop} \label{hhp3}
	Consider cube paths \[\pi = (\theta_0,x_0) \tfrac{d^{k_1}_{i_1}}{} (\theta_1,x_1) \tfrac{d^{k_2}_{i_2}}{}  \cdots \tfrac{d^{k_m}_{i_m}}{} (\theta_m,x_m)\] and \[\gamma = (\sigma_0,y_0) \tfrac{d^{q_1}_{r_1}}{} (\sigma_1,y_1) \tfrac{d^{q_2}_{r_2}}{}  \cdots \tfrac{d^{q_m}_{r_m}}{} (\sigma_m,y_m)\] in \(S\Q\) such that \(\pi \xleftrightarrow[]{\ell} \gamma\). Then \(\phi(\pi) \xleftrightarrow[]{\ell} \phi(\gamma)\).  
\end{prop}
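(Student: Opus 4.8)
My plan is to verify the three defining requirements of Definition~\ref{adjacent} directly for the pair \(\phi(\pi),\phi(\gamma)\). Write \(a_j=\theta_{j-k_j}^{-1}(i_j)\) and \(b_j=\sigma_{j-q_j}^{-1}(r_j)\) for their respective \(j\)th face indices. Since \(\phi\) changes neither the cubes nor the numbers \(k_j,q_j\), and since \(\pi\xleftrightarrow{\ell}\gamma\) gives \((\theta_j,x_j)=(\sigma_j,y_j)\) for all \(j\neq\ell\), the equalities \(x_j=y_j\) \((j\neq\ell)\) transfer verbatim to \(\phi(\pi),\phi(\gamma)\). For \(j\neq\ell,\ell+1\) one has \(k_j=q_j\), \(i_j=r_j\), and \(j-k_j\neq\ell\) (else \(j\in\{\ell,\ell+1\}\)), so \(\theta_{j-k_j}=\sigma_{j-k_j}\) and hence \(a_j=b_j\). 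This settles the first two requirements and leaves only the behaviour of the four indices \(a_\ell,a_{\ell+1},b_\ell,b_{\ell+1}\).

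As \(\phi\) preserves the \(k\)'s, the quadruple \((k_\ell,q_\ell,k_{\ell+1},q_{\ell+1})\) is unchanged, so the eight conditions of Definition~\ref{adjacent} fall into four groups sharing this quadruple: \(\{(\mathrm{i}),(\mathrm{ii})\}\), \(\{(\mathrm{iii}),(\mathrm{v})\}\), \(\{(\mathrm{iv}),(\mathrm{vi})\}\), \(\{(\mathrm{vii}),(\mathrm{viii})\}\). It suffices to show that whenever \(\pi,\gamma\) satisfy a condition of a group, \(\phi(\pi),\phi(\gamma)\) satisfy one of the two conditions of that same group. Because \(\xleftrightarrow{\ell}\) is symmetric and interchanging \(\pi\) with \(\gamma\) sends \((\mathrm{iii})\) to \((\mathrm{iv})\) and \((\mathrm{v})\) to \((\mathrm{vi})\), the group \(\{(\mathrm{iv}),(\mathrm{vi})\}\) is the mirror of \(\{(\mathrm{iii}),(\mathrm{v})\}\); likewise the two conditions inside \(\{(\mathrm{i}),(\mathrm{ii})\}\) and inside \(\{(\mathrm{vii}),(\mathrm{viii})\}\) are mirror images. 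Thus I need only treat the three groups \(\{(\mathrm{i}),(\mathrm{ii})\}\), \(\{(\mathrm{iii}),(\mathrm{v})\}\), \(\{(\mathrm{vii}),(\mathrm{viii})\}\), handling a single hypothesis up to this symmetry in each.

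Each group is handled by one computation following a common template. The cube-path equations at steps \(\ell\) and \(\ell+1\) express the only differing components, \(\theta_\ell\) and \(\sigma_\ell\), through single face maps of the shared neighbours \(\theta_{\ell-1}=\sigma_{\ell-1}\) and \(\theta_{\ell+1}=\sigma_{\ell+1}\); here I use that \(d^0_i\theta=d^1_i\theta\), so these permutation relations ignore the \(k\)'s and I may write \(d_i\) for the common face map. Inserting them into the inversion formula \((d_j\theta)^{-1}=d_{\theta^{-1}(j)}\theta^{-1}\) of Proposition~\ref{dsigmatheta}(3) and then reading off the explicit \(\downarrow\)-form of the face maps, one rewrites all four indices through a single permutation and reaches the normal form: there exist \(A\neq B\) (the values of one fixed \(\theta^{-1}\) at \(i_\ell\) and at \(i_{\ell+1}\), distinct because \(i_\ell\neq i_{\ell+1}\)) such that two of the four indices equal \(A\) and \(B\) and the other two equal \(A^{\downarrow B}\) and \(B^{\downarrow A}\). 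For instance, in \(\{(\mathrm{i}),(\mathrm{ii})\}\), setting \(\tau=\theta_{\ell+1}\), \(A=\tau^{-1}(i_\ell)\), \(B=\tau^{-1}(i_{\ell+1})\), one finds \(a_{\ell+1}=B\), \(b_{\ell+1}=A\), \(a_\ell=A^{\downarrow B}\), \(b_\ell=B^{\downarrow A}\); in \(\{(\mathrm{iii}),(\mathrm{v})\}\) the roles are played by \(A=\theta_\ell^{-1}(i_\ell)=a_\ell\) and \(B=\theta_\ell^{-1}(i_{\ell+1})=a_{\ell+1}\), with \(b_{\ell+1}=A^{\downarrow B}\) and \(b_\ell=B^{\downarrow A}\), and the same normal form is reached no matter which of \((\mathrm{iii}),(\mathrm{v})\) is the hypothesis.

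The conclusion then follows from the elementary dichotomy: for \(A\neq B\) one has \((A^{\downarrow B},B^{\downarrow A})=(A,B-1)\) if \(A<B\), and \((A^{\downarrow B},B^{\downarrow A})=(A-1,B)\) if \(A>B\). Substituting, the subcase \(A<B\) makes the four transformed indices satisfy exactly one of the two conditions of the group, and the subcase \(A>B\) the other; in either case \(\phi(\pi)\xleftrightarrow{\ell}\phi(\gamma)\). The main obstacle is the bookkeeping in the third step: for each group one must identify the correct shared neighbour, keep track of the fact that only \(\theta_\ell\) and \(\sigma_\ell\) differ, and evaluate the \(\downarrow\)-formula at the right argument with the right branch \(j<\theta^{-1}(i)\) versus \(j\geq\theta^{-1}(i)\). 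Once the normal form is in hand, the passage to the adjacency conditions is uniform across all groups.
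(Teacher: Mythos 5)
Your proof is correct and follows essentially the same route as the paper's: the core computation — expressing the differing permutations through face maps of shared neighbours, inverting via Proposition \ref{dsigmatheta}(3), applying the explicit $\downarrow$-formula, and concluding by the dichotomy on the relative order of the two $\theta^{-1}$-values — is exactly what the paper does in its explicit treatment of condition (i) of Definition \ref{adjacent}, and your case (i) computation reproduces it verbatim. Where the paper handles case (i) and declares the remaining cases analogous, you systematize them (grouping by the $k/q$-pattern, the $\pi\leftrightarrow\gamma$ mirror symmetry, and the common normal form $\{A,\,B,\,A^{\downarrow B},\,B^{\downarrow A}\}$), which is sharper bookkeeping of the same argument rather than a different proof.
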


\begin{proof}
	By our hypothesis, \((\theta_j,x_j) = (\sigma_j,y_j)\) for all \(j \not= \ell\), \(k_j = q_j\) and \(i_j = r_j\) for all \(j \not=  \ell, \ell +1\), and one of the conditions of Definition \ref{adjacent} holds. In all cases, \(x_j = y_j\) for all \(j \not= \ell\) and \(k_j = q_j\) and \(\theta_{j-k_j}^{-1}(i_j) = \sigma_{j-q_j}^{-1}(r_j)\) for all \(j \not=  \ell, \ell +1\). We will suppose that condition \ref{adjacent}(i) holds, i.e., 
	\[k_\ell = q_{\ell +1} = k_{\ell +1} = q_\ell = 0\quad \mbox{and}\quad  i_\ell = r_{\ell+1}  < i_{\ell+1} = r_\ell+1.\]
	The arguments in the remaining situations are analogous. We have
	\[(\theta_\ell, x_\ell) = d^0_{i_{\ell+1}}(\theta_{\ell+1},x_{\ell+1}) = (d^0_{i_{\ell+1}}\theta_{\ell+1},d^0_{\theta_{\ell+1}^{-1}(i_{\ell+1})}x_{\ell+1})\]
	and therefore 
	\begin{align*}
	\theta_\ell^{-1}(i_\ell) &= (d^0_{i_{\ell+1}}\theta_{\ell+1})^{-1}(i_\ell) = d^0_{\theta_{\ell+1}^{-1}(i_{\ell+1})}\theta_{\ell+1}^{-1}(i_\ell)\\
	&= \left\{\def\arraystretch{1.4} \begin{array}{ll}
	\theta_{\ell+1}^{-1}(i_\ell)  = \sigma_{\ell+1}^{-1}(r_{\ell + 1}), & \theta_{\ell+1}^{-1}(i_\ell) < \theta_{\ell+1}^{-1}(i_{\ell+1}),\\
	\theta_{\ell+1}^{-1}(i_\ell) - 1 = \sigma_{\ell+1}^{-1}(r_{\ell + 1}) - 1, & \theta_{\ell+1}^{-1}(i_\ell) > \theta_{\ell+1}^{-1}(i_{\ell+1}).
	\end{array}\right.
	\end{align*} 
	Since \(q_{\ell + 1} = 0\), 
	\((\sigma_\ell, y_\ell) = d^0_{r_{\ell+1}}(\sigma_{\ell+1},y_{\ell+1}) = (d^0_{r_{\ell+1}}\sigma_{\ell+1},d^0_{\sigma_{\ell+1}^{-1}(r_{\ell+1})}y_{\ell+1}) = (d^0_{i_{\ell}}\theta_{\ell+1},d^0_{\theta_{\ell+1}^{-1}(i_{\ell})}x_{\ell+1})\)
	and therefore
	\begin{align*}
	\sigma_\ell^{-1}(r_\ell) &= (d^0_{i_{\ell}}\theta_{\ell+1})^{-1}(i_{\ell+1}-1) = d^0_{\theta_{\ell+1}^{-1}(i_{\ell})}\theta_{\ell+1}^{-1}(i_{\ell+1}-1)\\
	&= \left\{ \def\arraystretch{1.4} \begin{array}{ll}
	\theta_{\ell+1}^{-1}(i_{\ell+1}), & \theta_{\ell+1}^{-1}(i_{\ell+1}) < \theta_{\ell+1}^{-1}(i_{\ell}),\\
	\theta_{\ell+1}^{-1}(i_{\ell+1}) - 1, & \theta_{\ell+1}^{-1}(i_{\ell+1}) > \theta_{\ell+1}^{-1}(i_{\ell}).
	\end{array}\right. 
	\end{align*}
	If \(\sigma_{\ell+1}^{-1}(r_{\ell + 1}) = \theta_{\ell+1}^{-1}(i_\ell) < \theta_{\ell+1}^{-1}(i_{\ell+1})\), we obtain
	\[\theta_\ell^{-1}(i_\ell)  =  \sigma_{\ell+1}^{-1}(r_{\ell + 1}) < \theta_{\ell+1}^{-1}(i_{\ell+1}) = \sigma_\ell^{-1}(r_\ell) + 1,\] 
	which shows that \(\phi(\pi)\) and \(\phi(\gamma)\) satisfy condition \ref{adjacent}(i). If \(\sigma_{\ell+1}^{-1}(r_{\ell + 1}) = \theta_{\ell+1}^{-1}(i_\ell) > \theta_{\ell+1}^{-1}(i_{\ell+1})\), we obtain
	\[\sigma_\ell^{-1}(r_\ell) = \theta_{\ell+1}^{-1}(i_{\ell+1}) <   \sigma_{\ell+1}^{-1}(r_{\ell + 1}) = \theta_\ell^{-1}(i_\ell) + 1,\] 
	which shows that \(\phi(\pi)\) and \(\phi(\gamma)\) satisfy condition \ref{adjacent}(ii). Consequently, \(\phi(\pi) \xleftrightarrow[]{\ell} \phi(\gamma)\).
\end{proof}

As a partial converse to Proposition \ref{hhp3}, we have the following result:

\begin{prop} \label{hhp4}
	Consider cube paths \[\pi = (\theta_0,x_0) \tfrac{d^{k_1}_{i_1}}{} (\theta_1,x_1) \tfrac{d^{k_2}_{i_2}}{}  \cdots \tfrac{d^{k_m}_{i_m}}{} (\theta_m,x_m)\] and \[\gamma = (\sigma_0,y_0) \tfrac{d^{q_1}_{r_1}}{} (\sigma_1,y_1) \tfrac{d^{q_2}_{r_2}}{}  \cdots \tfrac{d^{q_m}_{r_m}}{} (\sigma_m,y_m)\] in \(S\Q\) such that \((\theta_j,x_j) = (\sigma_j,y_j)\) for all \(j \not= \ell\), \(k_j = q_j\) and \(i_j = r_j\) for all \(j \not=  \ell, \ell +1\), and \(\phi(\pi) \xleftrightarrow[]{\ell} \phi(\gamma)\). Then \(\pi \xleftrightarrow[]{\ell} \gamma\).
\end{prop}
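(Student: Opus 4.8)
The plan is to imitate the computation in the proof of Proposition \ref{hhp3}, but to run it backwards. First I would observe that the two ``global'' requirements of $\ell$-adjacency for $\pi$ and $\gamma$ — namely that $x_j = y_j$ for all $j \neq \ell$ and that $k_j = q_j$, $i_j = r_j$ for all $j \neq \ell, \ell+1$ — are handed to us directly by the hypotheses (the first being part of $(\theta_j,x_j) = (\sigma_j,y_j)$). Hence the whole statement reduces to verifying that $\pi$ and $\gamma$ satisfy one of the index conditions \ref{adjacent}(i)--(viii).

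Next I would exploit the fact that $\phi$ leaves the direction data $k_\ell, k_{\ell+1}, q_\ell, q_{\ell+1}$ untouched, so that $\phi(\pi)$ and $\phi(\gamma)$ carry exactly the same $k$'s and $q$'s as $\pi$ and $\gamma$. Since $\phi(\pi) \xleftrightarrow{\ell} \phi(\gamma)$, this shared direction pattern matches one of the eight conditions, placing us in one of four direction classes; within each class the direction part of the target condition for $\pi, \gamma$ already holds, and only the index equalities and inequalities remain. It is worth recording at the outset that the face maps of $S$ satisfy $d^0_i\theta = d^1_i\theta$, so the permutation computations below are insensitive to the values of $k$ and $q$ and may be carried out uniformly across the classes.

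The core step is to translate the index relation between the $\phi$-indices back into an index relation between the $S\Q$-indices. Using the cube-path equations at positions $\ell$ and $\ell+1$ together with Proposition \ref{dsigmatheta}(3), I would express the two $\phi$-indices $\theta_{\ell-k_\ell}^{-1}(i_\ell)$ and $\theta_{\ell+1-k_{\ell+1}}^{-1}(i_{\ell+1})$ (and their $\gamma$-counterparts) through a single shared permutation and the relevant $d^0$-face maps, the position of this anchor varying with the direction class. Inverting the defining $\downarrow$-formula then recovers each $S\Q$-index from its $\phi$-index, and after a case split on the relative order of the two $S\Q$-indices (say $i_\ell < i_{\ell+1}$ versus $i_\ell \geq i_{\ell+1}$) one reads off, in each branch, exactly one of the two admissible conditions for $\pi, \gamma$. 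For instance, treating condition \ref{adjacent}(i) for $\phi(\pi), \phi(\gamma)$ in the all-$0$ class, the branch $i_\ell < i_{\ell+1}$ forces $i_\ell = r_{\ell+1}$ and yields condition \ref{adjacent}(i) for $\pi, \gamma$, whereas the branch $i_\ell \geq i_{\ell+1}$ forces $r_{\ell+1} = i_\ell + 1$ and yields condition \ref{adjacent}(ii). As in Proposition \ref{hhp3}, a single representative case can be written out in full and the remaining seven declared analogous.

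The main obstacle is precisely this inversion and its bookkeeping. The subtle point is that, just as the forward map of Proposition \ref{hhp3} sends a single $S\Q$-condition to either member of a paired $\phi$-condition according to the order of permutation values, the converse sends a single $\phi$-condition to either member of a paired $S\Q$-condition according to the order of the $S\Q$-indices; one must therefore resist the expectation that the ``same'' condition reappears, and must check that the degenerate branch in which the two indices coincide (e.g.\ $i_\ell = i_{\ell+1}$) is correctly absorbed into the complementary condition rather than producing a contradiction. Keeping track of which permutation serves as the anchor in each direction class, and confirming that the strict inequalities fall on the correct side, is where the care is needed.
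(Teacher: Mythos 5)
Your proposal is correct and takes essentially the same route as the paper's proof: the paper likewise reduces everything to the index conditions, treats condition \ref{adjacent}(i) as the single representative case (declaring the rest analogous), expresses $\theta_\ell$ and $\sigma_\ell$ as $d^0$-faces of the shared anchor permutation $\theta_{\ell+1}=\sigma_{\ell+1}$, and recovers $i_\ell$ and $r_\ell$ by the two-way case split, concluding that $\pi$ and $\gamma$ satisfy either condition \ref{adjacent}(i) or condition \ref{adjacent}(ii). Your key observation---that a single $\phi$-condition lifts to either member of a paired $S\Q$-condition depending on the order of the indices, with the degenerate equal-index case absorbed into the complementary condition---is exactly the dichotomy appearing at the end of the paper's argument.
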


\begin{proof}
	Since the cube paths \(\phi(\pi)\) and \(\phi(\gamma)\) are \(\ell\)-adjacent, they satisfy one of the conditions of Definition \ref{adjacent}. We will suppose that condition (i) holds. The arguments in the remaining situations are analogous.  
	So our hypothesis is that \(k_\ell = q_{\ell +1} = k_{\ell +1} = q_\ell = 0\)
	and \(\theta_\ell^{-1}(i_\ell) = \sigma_{\ell +1}^{-1}(r_{\ell+1})  < \theta_{\ell + 1 }^{-1}(i_{\ell+1}) = \sigma_{\ell }^{-1}(r_\ell)+1\). We have
	\begin{align*}
	r_\ell &= \sigma_\ell(\sigma_\ell^{-1}(r_\ell)) = d^0_{r_{\ell +1}}\theta_{\ell +1}(\theta_{\ell +1}^{-1}(i_{\ell +1}) -1)\\ &= \left\{\def\setarraystretch{1.4} \begin{array}{ll}
	\theta_{\ell +1}(\theta_{\ell +1}^{-1}(i_{\ell +1})) = i_{\ell +1}, & i_{\ell +1} < r_{\ell +1},\\
	i_{\ell +1} -1, & i_{\ell +1} > r_{\ell +1}
	\end{array} \right.
	\end{align*}
	and 
	\begin{align*}
	i_\ell &= \theta_\ell(\theta_{\ell}^{-1}(i_\ell)) = d^0_{i_{\ell +1}}\theta_{\ell +1}(\theta_{\ell +1}^{-1}(r_{\ell +1}))\\ &= \left\{\def\setarraystretch{1.4} \begin{array}{ll} 
	\theta_{\ell +1}(\theta_{\ell +1}^{-1}(r_{\ell +1})) = r_{\ell +1}, & r_{\ell +1} < i_{\ell +1},\\
	r_{\ell +1} -1, & r_{\ell +1} > i_{\ell +1}.
	\end{array}\right.
	\end{align*}
	Hence either \(r_\ell = i_{\ell +1} < r_{\ell +1} = i_\ell +1\) or \(i_\ell = r_{\ell +1} < i_{\ell +1} = r_\ell +1\), which shows that \(\pi\) and \(\gamma\) satisfy either condition \ref{adjacent}(ii) or condition \ref{adjacent}(i). Thus \(\pi \xleftrightarrow{\ell} \gamma\).
\end{proof}

The last result of this section is the following adjacency lifting property of \(\phi\):

\begin{prop} \label{hhp4lem}
	Let \(\pi = (\theta_0,x_0) \tfrac{d^{k_1}_{i_1}}{} (\theta_1,x_1) \tfrac{d^{k_2}_{i_2}}{}  \cdots \tfrac{d^{k_m}_{i_m}}{} (\theta_m,x_m)\) be a cube path in \(S\Q\), and let \({\rho = y_0 \tfrac{d^{q_1}_{r_1}}{} y_1 \tfrac{d^{q_2}_{r_2}}{}  \cdots \tfrac{d^{q_m}_{r_m}}{} y_m}\) be a cube path in \(\Q\) such that \(\phi(\pi) \xleftrightarrow{\ell} \rho\). Then there exists a cube path \(\gamma\) in \(S\Q\) such that \(\phi(\gamma) = \rho\) and \(\pi \xleftrightarrow[]{\ell} \gamma\). 
\end{prop}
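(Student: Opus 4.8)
The plan is to construct the required cube path $\gamma$ explicitly and then obtain the adjacency $\pi \xleftrightarrow{\ell} \gamma$ from Proposition \ref{hhp4}, rather than checking the conditions of Definition \ref{adjacent} by hand. Since $\phi$ preserves the direction components and $\phi(\pi) \xleftrightarrow{\ell} \rho$, any admissible $\gamma$ must agree with $\pi$ in all cubes except the one at position $\ell$, and in all directions and indices except those at positions $\ell$ and $\ell+1$; its directions at $\ell$ and $\ell+1$ must be the directions $q_\ell, q_{\ell+1}$ of $\rho$, and the $P_\Q$-component of its cube at position $\ell$ must be $y_\ell$. The only data still to be chosen are thus the permutation $\sigma_\ell$ making up the cube $(\sigma_\ell, y_\ell)$ at position $\ell$ and the two face indices $s_\ell, s_{\ell+1}$ of $\gamma$ at positions $\ell, \ell+1$, and they must be chosen so that $\gamma$ is a cube path and $\phi(\gamma) = \rho$.

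First I fix the indices. The face map at position $\ell$ (resp.\ $\ell+1$) acts on the cube of $\gamma$ at position $\ell - q_\ell$ (resp.\ $\ell+1-q_{\ell+1}$), so the requirement that $\phi(\gamma)$ carry indices $r_\ell, r_{\ell+1}$ there forces $s_\ell$ and $s_{\ell+1}$ to be the images of $r_\ell$ and $r_{\ell+1}$ under the permutations attached to those two cubes. With $s_\ell, s_{\ell+1}$ so determined, the $P_\Q$-parts of the two face relations of $\gamma$ at $\ell, \ell+1$ reduce, by the face map formula $d^k_i(\theta,x) = (d^k_i\theta, d^k_{\theta^{-1}(i)}x)$ in $SP_\Q$, to the corresponding face relations of $\rho$, which hold because $\rho$ is a cube path. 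Everything therefore comes down to producing $\sigma_\ell$ with the two prescribed faces as a permutation.

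If $q_\ell = 1$ or $q_{\ell+1} = 0$ (these are the conditions \ref{adjacent}(i)--(iii), (v), (vii), (viii)), the cube $(\sigma_\ell, y_\ell)$ is a face of a cube that $\gamma$ shares with $\pi$: of $(\theta_{\ell-1}, x_{\ell-1})$ when $q_\ell = 1$, and of $(\theta_{\ell+1}, x_{\ell+1})$ when $q_{\ell+1} = 0$. I then simply define $\sigma_\ell$ to be the relevant face (for instance $\sigma_\ell = d^0_{s_{\ell+1}}\theta_{\ell+1}$), and the one remaining face relation becomes an identity between iterated face maps of $S$, which follows from the cubical identities of $S$ and Proposition \ref{dsigmatheta}, by a computation of the same kind as in the proof of Proposition \ref{hhp3}.

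The one genuinely different case is $q_\ell = 0$ and $q_{\ell+1} = 1$, i.e.\ conditions \ref{adjacent}(iv) and (vi), where $(\sigma_\ell, y_\ell)$ is one dimension higher than both of its neighbours and is a face of neither, so $\sigma_\ell$ has to be constructed. On passing to inverses and using Proposition \ref{dsigmatheta}(3), the two faces demanded of $\sigma_\ell$ are equivalent to prescribing the two $d^0$-faces $d^0_{r_\ell}\sigma_\ell^{-1} = \theta_{\ell-1}^{-1}$ and $d^0_{r_{\ell+1}}\sigma_\ell^{-1} = \theta_{\ell+1}^{-1}$; their compatibility, needed to apply Proposition \ref{existstheta}, is precisely what one gets by inverting the relation $\theta_\ell = d^0_{i_\ell}\theta_{\ell-1} = d^0_{i_{\ell+1}}\theta_{\ell+1}$ furnished by $\pi$ and inserting the index relations of (iv) or (vi). Proposition \ref{existstheta} then yields $\sigma_\ell^{-1}$, hence $\sigma_\ell$. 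Once $\gamma$ is fully defined it is a cube path (the face relations away from $\ell, \ell+1$ are inherited from $\pi$) with $\phi(\gamma) = \rho$, so that $\phi(\pi) \xleftrightarrow{\ell} \phi(\gamma)$, and Proposition \ref{hhp4} delivers $\pi \xleftrightarrow{\ell} \gamma$. I expect this last case to be the main obstacle: building the higher-dimensional permutation $\sigma_\ell$ and verifying the compatibility hypothesis of Proposition \ref{existstheta}.
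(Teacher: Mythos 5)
Your proposal is correct and follows essentially the same route as the paper's proof: the same reduction via Proposition \ref{hhp4} to constructing \(\gamma\) with \(\phi(\gamma)=\rho\), the same choice of indices \(s_\ell=\sigma_{\ell-q_\ell}(r_\ell)\) and \(s_{\ell+1}=\sigma_{\ell+1-q_{\ell+1}}(r_{\ell+1})\), the same definition of \(\sigma_\ell\) as a face of \(\theta_{\ell+1}\) (or \(\theta_{\ell-1}\)) in the cases \(q_{\ell+1}=0\) or \(q_\ell=1\), and the same appeal to Proposition \ref{existstheta} in the remaining cases \ref{adjacent}(iv)/(vi). The only minor (and perfectly valid) deviation is in that critical case: the paper applies Proposition \ref{existstheta} directly to \(\theta_{\ell-1},\theta_{\ell+1}\) and needs its ordering clause to pin down \(s_\ell=i_{\ell+1}+1\) and \(s_{\ell+1}=i_\ell\), whereas you apply it to the inverses with prescribed face indices \(r_\ell<r_{\ell+1}\) and recover \(\sigma_\ell\) via Proposition \ref{dsigmatheta}(3), which makes the index bookkeeping automatic.
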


\begin{sloppypar}
\begin{proof}
		Since \(\phi(\pi) \xleftrightarrow{\ell} \rho\), we have \(x_j = y_j\) for all \(j \not= \ell\) and \(k_j = q_j\) for all \(j \not= \ell, \ell+1\). By Proposition \ref{hhp4}, it is therefore enough to construct a cube path \[\gamma = (\sigma_0,y_0) \tfrac{d^{q_1}_{s_1}}{} (\sigma_1,y_1) \tfrac{d^{q_2}_{s_2}}{} \cdots \tfrac{d^{q_m}_{s_m}}{} (\sigma_m,y_m)\] such that \(\sigma_j = \theta_j\) for all \(j \not= \ell\), \(s_j = i_j\) for all \(j \not= \ell, \ell +1\), and \(\phi(\gamma) = \rho\). By our hypothesis,  \(\phi(\pi)\) and \(\rho\) satisfy one of the conditions of Definition \ref{adjacent}. In each of these situations, we start the construction of \(\gamma\) by  setting \(\sigma_j = \theta_j\) for \(j \not= \ell\) and \(s_j = i_j\) for \(j \not= \ell, \ell +1\), as required. Since \(\pi\) is a cube path, \((\sigma_0,y_0) = (\theta_0,x_0) = I_{S\Q}\). Since \(\rho\) is a cube path, \(2\sum \limits_{p=1}^{j}q_j \leq j\) for all \(j\). Since \({\phi(\pi) \xleftrightarrow[]{\ell} \rho}\), we have \(\sum \limits_{p=1}^jk_p = \sum \limits_{p=1}^jq_p\) for \(j \not= \ell\) and \(\theta_{j-k_j}^{-1}(i_j) = r_j\) for \(j \not= \ell, \ell+1\).  Therefore  \((\sigma_j, y_j) = (\theta_j, x_j) \in (P_{S\Q})_{j - 2\sum \limits_{p=1}^jk_p} = (P_{S\Q})_{j - 2\sum \limits_{p=1}^jq_p}\) for \(j \not= \ell\) and 	\(s_j = i_j \in {\{1, \dots, j - k_j - 2\sum \limits_{p=1}^{j-k_j}k_p\}} = \{1, \dots, j - q_j - 2\sum \limits_{p=1}^{j-q_j}q_p\}\),   \(d^{q_j}_{s_j}(\sigma_{j-q_j},y_{j-q_j}) = d^{k_j}_{i_j}(\theta_{j-k_j},x_{j-k_j}) = (\theta_{j-1 + k_j},x_{j-1 + k_j}) = (\sigma_{j-1 + q_j},y_{j-1 + q_j})\), and \(\sigma_{j-q_j}^{-1}(s_j) = \theta_{j-k_j}^{-1}(i_j) = r_j\) for \(j \not=\ell, \ell +1\). To finish the construction of \(\gamma\), it remains to  define the permutation \(\sigma_\ell \in S_{\ell - 2\sum \limits_{p=1}^\ell q_p}\) and to check that with \(s_\ell = \sigma_{\ell-q_\ell}(r_\ell)\) and \({s_{\ell +1} = \sigma_{\ell+1-q_{\ell+1}}(r_{\ell+1})}\), one has  \(d^{q_\ell}_{s_\ell}(\sigma_{\ell-q_\ell},y_{\ell-q_\ell}) = (\sigma_{\ell-1+q_\ell}, y_{\ell-1+q_\ell})\) and  \(d^{q_{\ell+1}}_{s_{\ell+1}}(\sigma_{\ell+1-q_{\ell+1}},y_{\ell +1-q_{\ell+1}}) = (\sigma_{\ell+q_{\ell+1}}, y_{\ell+q_{\ell+1}})\). We will only consider conditions \ref{adjacent}(i), (iii), and (iv). In each of the remaining situations, the arguments are analogous to those used in one of these three cases.

	Suppose first that \(\phi(\pi)\) and \(\rho\) satisfy condition \ref{adjacent}(i). Then \(k_\ell = q_{\ell +1} = k_{\ell +1} = q_\ell = 0\) and \(\theta_{\ell}^{-1}(i_\ell) = r_{\ell+1}  < \theta_{\ell + 1}^{-1}(i_{\ell+1}) = r_\ell+1\). In this situation, we set \(\sigma_\ell = d^0_{\theta_{\ell +1}(r_{\ell +1})}\theta_{\ell +1} = d^0_{\sigma_{\ell +1}(r_{\ell +1})}\sigma_{\ell +1}\). Since \(\sigma_{\ell +1} \in S_{\ell +1 - 2\sum \limits_{p=1}^{\ell +1}q_p}\) and \({r_{\ell +1}\in \{1, \dots, \ell +1 - 2\sum \limits_{p=1}^{\ell +1}q_p\}}\), \(\sigma_\ell\) is a well-defined element of \(S_{\ell - 2\sum \limits_{p=1}^{\ell}q_p}\). We compute 
	\(d^0_{s_{\ell +1}}(\sigma_{\ell +1},y_{\ell +1}) = d^0_{\sigma_{\ell +1}(r_{\ell +1})}(\sigma_{\ell +1},y_{\ell +1}) = (d^0_{\sigma_{\ell +1}(r_{\ell +1})}\sigma_{\ell +1},d^0_{r_{\ell +1}}y_{\ell +1}) = (\sigma_\ell, y_\ell)\)
	and, using Proposition \ref{precubtheta}, 
	\begin{align*}
	d^0_{s_\ell}(\sigma_\ell,y_\ell) &= (d^0_{s_\ell}\sigma_\ell,d^0_{\sigma_\ell^{-1}(s_\ell)}y_\ell)= (d^0_{\sigma_\ell(r_\ell)}\sigma_\ell,d^0_{\sigma_\ell^{-1}(\sigma_\ell(r_\ell))}y_\ell)\\
	&= (d^0_{d^0_{\theta_{\ell +1}(r_{\ell +1})}\theta_{\ell +1}(r_\ell)}d^0_{\theta_{\ell +1}(r_{\ell +1})}\theta_{\ell +1},d^0_{r_\ell}y_\ell)\\
	&= (d^0_{d^0_{\theta_{\ell +1}(r_{\ell +1})}\theta_{\ell +1}(r_\ell+1-1)}d^0_{\theta_{\ell +1}(r_{\ell +1})}\theta_{\ell +1},y_{\ell-1})\\
	&= (d^0_{d^0_{\theta_{\ell +1}(r_\ell +1)}\theta_{\ell +1}(r_{\ell+1})}d^0_{\theta_{\ell +1}(r_\ell +1)}\theta_{\ell +1},y_{\ell-1})\\
	&= (d^0_{d^0_{i_{\ell +1}}\theta_{\ell +1}(r_{\ell +1})}d^0_{i_{\ell +1}}\theta_{\ell +1},y_{\ell-1})\\
	&= (d^0_{\theta_\ell(r_{\ell +1})}\theta_\ell,y_{\ell-1})
	= (d^0_{i_\ell}\theta_\ell,y_{\ell-1})
	= (\theta_{\ell-1},y_{\ell-1})
	= (\sigma_{\ell-1},y_{\ell-1}).
	\end{align*}

		Suppose now that \(\phi(\pi)\) and \(\rho\) satisfy condition \ref{adjacent}(iii). Then \(k_\ell = q_{\ell +1} = 0\), \(k_{\ell +1} = q_\ell = 1\), and \(\theta_{\ell}^{-1}(i_\ell) = r_{\ell+1}  < \theta_{\ell}^{-1}(i_{\ell+1}) = r_\ell+1\). Set \(\sigma_\ell = d^0_{\theta_{\ell +1}(r_{\ell +1})}\theta_{\ell +1}\). As before, \(\sigma_\ell\) is a well-defined element of \(S_{\ell - 2\sum \limits_{p=1}^{\ell}q_p}\). Since \(r_\ell \geq r_{\ell +1} = \theta_\ell^{-1}(i_\ell)\), we have
		\begin{align*}
		s_\ell &= \theta_{\ell-1}(r_\ell) = d^0_{i_\ell}\theta_\ell(r_\ell)
		= \left\{\def\arraystretch{1.4}\begin{array}{ll}
		\theta_\ell(r_\ell +1) = i_{\ell +1}, & i_{\ell +1} < i_\ell,\\
		\theta_\ell(r_\ell +1) -1  = i_{\ell +1} -1, & i_{\ell +1} > i_\ell.
		\end{array} \right. 
		\end{align*}
		Since \(r_{\ell +1} < \theta_\ell^{-1}(i_{\ell +1})\), we have 
		\begin{align*}
		s_{\ell+1} &= \theta_{\ell+1}(r_{\ell+1}) = d^1_{i_{\ell+1}}\theta_\ell(r_{\ell+1})
		= \left\{\def\arraystretch{1.4}\begin{array}{ll}
		\theta_\ell(r_{\ell +1}) = i_{\ell}, & i_{\ell} < i_{\ell+1},\\
		\theta_\ell(r_{\ell +1}) -1  = i_{\ell} -1, & i_{\ell} > i_{\ell+1}.
		\end{array} \right. 
		\end{align*}
		Thus either \(s_\ell = i_{\ell +1} < i_\ell =  s_{\ell +1} +1\) or \(i_\ell = s_{\ell +1} < i_{\ell +1} = s_\ell +1\). In the first situation, 
		\begin{align*}
		d^1_{s_\ell}\sigma_{\ell -1} &= d^1_{i_{\ell+1}}\theta_{\ell -1} = d^1_{i_{\ell+1}}d^0_{i_\ell}\theta_{\ell} = d^0_{i_\ell-1}d^1_{i_{\ell+1}}\theta_{\ell}\\ &= d^0_{s_{\ell+1}}\theta_{\ell+1} = d^0_{\theta_{\ell+1}(r_{\ell +1})}\theta_{\ell+1} = \sigma_\ell.
		\end{align*}
		In the second situation, 
		\begin{align*}
		d^1_{s_\ell}\sigma_{\ell -1} &= d^1_{i_{\ell+1}-1}\theta_{\ell -1} = d^1_{i_{\ell+1}-1}d^0_{i_\ell}\theta_{\ell} = d^0_{i_\ell}d^1_{i_{\ell+1}}\theta_{\ell}\\ &= d^0_{s_{\ell+1}}\theta_{\ell+1} = d^0_{\theta_{\ell+1}(r_{\ell +1})}\theta_{\ell+1} = \sigma_\ell.
		\end{align*}
		Hence \(d^1_{s_\ell}(\sigma_{\ell -1},y_{\ell-1}) = (d^1_{s_\ell}\sigma_{\ell -1},d^1_{\sigma_{\ell -1}^{-1}(s_\ell)}y_{\ell-1}) = (\sigma_{\ell},d^1_{r_\ell}y_{\ell-1}) = (\sigma_{\ell},y_{\ell})\) and \(d^0_{s_{\ell +1}}(\sigma_{\ell +1},y_{\ell +1}) = d^0_{\theta_{\ell +1}(r_{\ell +1})}(\theta_{\ell +1},y_{\ell +1}) = (d^0_{\theta_{\ell +1}(r_{\ell +1})}\theta_{\ell +1},d^0_{r_{\ell +1}}y_{\ell +1}) = (\sigma_\ell, y_\ell)\).

	Suppose finally that \(\phi(\pi)\) and \(\rho\) satisfy condition \ref{adjacent}(iv). Then \(k_\ell = q_{\ell +1} = 1\), \(k_{\ell +1} = q_\ell = 0\), and \(r_\ell = \theta_{\ell+1}^{-1}(i_{\ell+1}) < r_{\ell+1} = \theta_{\ell-1}^{-1}(i_\ell)+1\). We also assume that \(i_\ell \leq i_{\ell +1}\) and leave the analogous case \(i_\ell > i_{\ell +1}\) to the reader. Since \(d^1_{i_\ell}\theta_{\ell -1} = \theta_\ell = d^0_{i_{\ell +1}}\theta_{\ell +1}\) and \(\theta_{\ell -1}^{-1}(i_\ell) \geq \theta_{\ell +1}^{-1}(i_{\ell +1})\), \(\mbox{{Proposition \ref{existstheta}}}\) implies that there exists a permutation \(\sigma_\ell \in S_{\ell - 2\sum \limits_{p=1}^{\ell-1} k_p} = S_{\ell - 2\sum \limits_{p=1}^\ell q_p}\) such that \(d^1_{i_\ell} \sigma_\ell = \theta_{\ell +1}\), \(d^0_{i_{\ell +1}+1}\sigma_\ell = \theta_{\ell-1}\), and \(\sigma_\ell ^{-1}(i_\ell) > \sigma_\ell ^{-1}(i_{\ell +1}+1)\). We have 
	\begin{align*}
	r_\ell &= \theta_{\ell +1}^{-1}(i_{\ell+1}) = (d^1_{i_\ell}\sigma_\ell)^{-1}(i_{\ell+1}) = d^1_{\sigma_\ell ^{-1}(i_\ell)}\sigma_\ell ^{-1}(i_{\ell+1}) = \sigma_\ell^{-1}(i_{\ell+1} +1)
	\end{align*}
	and 
	\begin{align*}
	r_{\ell +1} &= \theta_{\ell -1}^{-1}(i_{\ell}) +1 = (d^0_{i_{\ell+1}+1}\sigma_\ell)^{-1}(i_\ell) +1\\ &= d^0_{\sigma_\ell ^{-1}(i_{\ell+1}+1)}\sigma_\ell ^{-1}(i_{\ell}) +1 = \sigma_\ell^{-1}(i_\ell).
	\end{align*}
	Hence \(s_\ell = \sigma_{\ell}(r_\ell) = i_{\ell +1}+1\) and \(s_{\ell +1} = \sigma_{\ell}(r_{\ell +1}) = i_\ell\). 
	We therefore have   
	\(d^0_{s_\ell}(\sigma_\ell, y_\ell) = d^0_{i_{\ell+1}+1}(\sigma_\ell, y_\ell)  = (d^0_{i_{\ell+1}+1}\sigma_\ell, d^0_{\sigma_\ell^{-1}(i_{\ell +1}+1)}y_\ell) = (\theta_{\ell -1}, d^0_{r_\ell}y_\ell) = (\sigma_{\ell -1},y_{\ell -1})\)
	and 
	\(d^1_{s_{\ell+1}}(\sigma_\ell, y_\ell) = d^1_{i_{\ell}}(\sigma_\ell, y_\ell)  = (d^1_{i_{\ell}}\sigma_\ell, d^1_{\sigma_\ell^{-1}(i_{\ell })}y_\ell) = (\theta_{\ell +1}, d^1_{r_{\ell+1}}y_\ell) = (\sigma_{\ell +1},y_{\ell +1})\).
\end{proof}
\end{sloppypar}

\section{Hereditary history-preserving bisimilarity}

A \emph{hereditary history-pre\-serving bisimulation} between two HDAs is a relation \(R\) between their cube paths such that following conditions hold (see  \cite{vanGlabbeek}): 

\begin{enumerate}[\hspace{0.3cm} (1)]
	\item The cube paths of length \(0\) are related.
	\item If \(\pi R \rho\), then \(\mathit{split\mbox{-}trace}(\pi) = \mathit{split\mbox{-}trace}(\rho)\).
	\item If \(\pi R \rho\) and \(\pi \xleftrightarrow{\ell} \pi'\), then \(\exists \rho'\) with \(\rho \xleftrightarrow{\ell} \rho '\) and \(\pi' R \rho'\).
	\item If \(\pi R \rho\) and \(\rho \xleftrightarrow{\ell} \rho '\), then \(\exists \pi'\) with \(\pi \xleftrightarrow{\ell} \pi'\) and \(\pi' R \rho'\).
	\item If \(\pi R \rho\) and \(\pi \to \pi'\), then \(\exists \rho'\) with \(\rho \to \rho '\) and \(\pi' R \rho'\).
	\item If \(\pi R \rho\) and \(\rho \to \rho '\), then \(\exists \pi'\) with \(\pi \to \pi'\) and \(\pi' R \rho'\).
	\item If \(\pi R \rho\), then \(\mathit{end}(\pi)\) is a final state if and only if \(\mathit{end}(\rho)\) is a final state.
	\item If \(\pi R \rho\) and \(\pi' \to \pi\), then \(\exists \rho'\) with \(\rho' \to \rho \) and \(\pi' R \rho'\).
	\item If \(\pi R \rho\) and \(\rho' \to \rho\), then \(\exists \pi'\) with \(\pi' \to \pi\) and \(\pi' R \rho'\).
\end{enumerate}
Two HDAs are called \emph{hhp-bisimilar} if there exists a hereditary history-pre\-serving bisimulation between them.

\begin{theor} \label{symhhp}
	Let \(\Q\) be an HDA. Then \(\Q\) and \(S\Q\) are hhp-bisimilar.
\end{theor}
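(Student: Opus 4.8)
The plan is to take the relation $R$ to be the graph of the map $\phi$: declare $\pi \mathrel{R} \rho$, for a cube path $\pi$ in $S\Q$ and a cube path $\rho$ in $\Q$, exactly when $\rho = \phi(\pi)$, and then verify that $R$ satisfies the nine defining conditions of a hereditary history-preserving bisimulation between $S\Q$ and $\Q$. Several conditions are immediate from the construction of $\phi$. Since $\phi$ sends the unique length-$0$ cube path $(id,I_\Q)$ of $S\Q$ to the length-$0$ cube path $I_\Q$ of $\Q$, condition (1) holds. Because the $j$-th step of $\phi(\pi)$ depends only on the first $j$ steps of $\pi$, the map $\phi$ commutes with truncation and preserves the extension relation $\to$; this yields conditions (5), (8), and (9) at once (taking $\rho' = \phi(\pi')$, or $\pi'$ the corresponding truncation of $\pi$). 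For condition (7), note that $F_{S\Q} = S_0\times F_\Q$ and $S_0 = \{id\}$, so $\mathit{end}(\pi) = (\theta_m,x_m)$ is final if and only if $x_m = \mathit{end}(\phi(\pi))$ is a vertex lying in $F_\Q$.

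Condition (2), equality of split traces, is where Proposition \ref{edgetheta} enters: for each step of $\pi$ one has $e_{i_j}(\theta_{j-k_j},x_{j-k_j}) = (id, e_{\theta_{j-k_j}^{-1}(i_j)}x_{j-k_j})$, and since $\lambda_{S\Q}(id,\cdot) = \lambda_\Q$, the label recorded by $\pi$ at step $j$ equals $\lambda_\Q(e_{\theta_{j-k_j}^{-1}(i_j)}x_{j-k_j})$, which is precisely the label recorded by $\phi(\pi)$ at step $j$; the second components $k_j$ agree by definition of $\phi$. Conditions (3) and (4) are exactly the adjacency-transport results already proved: condition (3) follows from Proposition \ref{hhp3} (take $\rho' = \phi(\pi')$), and condition (4) follows from the adjacency lifting property of Proposition \ref{hhp4lem}, which produces the required $\pi'$ with $\phi(\pi') = \rho'$ and $\pi \xleftrightarrow{\ell} \pi'$.

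The only genuinely new work is condition (6): given $\pi \mathrel{R} \rho$ and a one-step extension $\rho \to \rho'$ of $\rho = \phi(\pi)$, I must lift the new step to an extension of $\pi$. Let the new step of $\rho'$ carry the data $(q,r)$ and relate $x_m = \mathit{end}(\rho)$ to a new cube $y_{m+1}$, with $n = \dim x_m$. If $q = 1$ (a termination step), the extension is forced: set $i_{m+1} = \theta_m(r)$, so that $\theta_m^{-1}(i_{m+1}) = r$ and $d^{1}_{i_{m+1}}(\theta_m,x_m) = (d^1_{i_{m+1}}\theta_m,\, d^1_r x_m)$ has $\phi$-image the desired step. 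If $q = 0$ (a start step, so $d^0_r y_{m+1} = x_m$), take $i_{m+1} = n+1$ and let $\theta_{m+1} \in S_{n+1}$ be obtained from $\theta_m$ by inserting the value $n+1$ in position $r$; then $\theta_{m+1}^{-1}(n+1) = r$ and $d^0_{n+1}\theta_{m+1} = \theta_m$, so the extended path maps under $\phi$ to $\rho'$. A general multi-step extension is handled by iterating this single-step construction.

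The routine parts are the easy conditions, the split-trace identity of condition (2), and the compatibility checks for the explicit formulas in condition (6). The conceptual heavy lifting — transporting an adjacency \emph{backwards} across $\phi$ — has already been carried out in Proposition \ref{hhp4lem}, so I expect the main obstacle to be entirely absorbed by that result, leaving only bookkeeping. Assembling these verifications shows that $R$ is a hereditary history-preserving bisimulation, whence $\Q$ and $S\Q$ are hhp-bisimilar.
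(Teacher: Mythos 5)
Your proposal is correct and takes essentially the same route as the paper: the relation is the graph of \(\phi\), conditions (1), (5), (7), (8), (9) follow from \(\phi\) commuting with extension and truncation, condition (2) uses Proposition \ref{edgetheta}, conditions (3) and (4) are discharged by Propositions \ref{hhp3} and \ref{hhp4lem}, and condition (6) is settled by an explicit one-step lift (the termination case being forced, exactly as in the paper). The only, immaterial, difference is in the start-step case of (6): the paper inserts the value \(1\) at position \(r\) of \(\theta\) (shifting the other values up) and uses face index \(1\), whereas you insert the value \(n+1\) at position \(r\) and use face index \(n+1\); both choices satisfy the required identities.
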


\begin{proof}
	Consider the relation \(R\) on cube paths of \(S\Q\) and \(\Q\) defined by \[\pi R \rho \Leftrightarrow \rho = \phi(\pi),\]
	where \(\phi(\pi)\) is the cube path defined in Section \ref{seccubepaths}. We show that \(R\) is a hereditary history-preserving bisimulation. Properties (1), (5), (7), (8), and (9) are obvious. Property (3) follows from Proposition \ref{hhp3}. Property (4) follows from Proposition \ref{hhp4lem}. It remains to establish properties (2) and (6).	
	
	(2) By Proposition \ref{edgetheta}, we have 
	\begin{align*}
	\mathit{split\mbox{-}trace}(\pi) &= ((\lambda_{S\Q}(e_{i_j}(\theta_{j-k_j},x_{j-k_j}),k_j))_{j = 1, \dots, m}\\
	&= ((\lambda_{S\Q}(id,e_{\theta_{j-k_j}^{-1}(i_j)}x_{j-k_j}),k_j))_{j = 1, \dots, m}\\
	&= ((\lambda_{\Q}(e_{\theta_{j-k_j}^{-1}(i_j)}x_{j-k_j}),k_j))_{j = 1, \dots, m}\\
	&= \mathit{split\mbox{-}trace}(\phi(\pi)).
	\end{align*}	
	
	(6) Consider a cube path \(\pi\) in \(S\Q\), and suppose that \(\phi(\pi) \to \rho'\). We may suppose that \(\rho' = \phi(\pi)  \tfrac{d^{k}_{r}}{} y\). Let \(\mathit{end}(\pi) = (\theta,x)\), and suppose that \((\theta,x) \in (P_{S\Q})_n = S_n\times (P_\Q)_n\). Then \(\mathit{end}(\phi(\pi)) = x \in (P_\Q)_n\). Suppose first that \(k=1\). Then \(y \in (P_\Q)_{n-1}\), \(r \in \{1, \dots, n\}\), and \(y = d^1_rx\).
	Since 
	\[d^1_{\theta(r)}(\theta,x) = (d^1_{\theta(r)}\theta,d^1_{\theta^{-1}(\theta(r))}x) = (d^1_{\theta(r)}\theta,y),\]
	we may extend \(\pi\) to \(\pi' = \pi \tfrac{d^1_{\theta(r)}}{} (d^1_{\theta(r)}\theta,y)\). We then have \(\phi(\pi') = \phi(\pi) \tfrac{d^1_r}{} y = \rho'\), i.e., \(\pi' R \rho'\).	
	
	Suppose now that \(k = 0\). Then \(y \in (P_\Q)_{n+1}\), \(r \in \{1, \dots, n+1\}\), and \(x = d^0_ry\). Define \(\sigma \in S_{n+1}\) by 
	\[\sigma = (\theta(1)^{\uparrow 1} \;\; \theta(2)^{\uparrow 1} \;\; \cdots \;\; \theta(r-1)^{\uparrow 1}\;\; 1 \;\; \theta(r)^{\uparrow 1} \;\; \cdots\;\; \theta(n)^{\uparrow 1}).\]
	Then \(\sigma^{-1}(1) = r\) and 
	\[d^0_1\sigma = (\theta(1)^{\uparrow 1 \downarrow 1} \;\; \theta(2)^{\uparrow 1\downarrow 1} \;\; \cdots \;\; \theta(r-1)^{\uparrow 1 \downarrow 1}\;\; \theta(r)^{\uparrow 1\downarrow 1} \;\; \cdots\;\; \theta(n)^{\uparrow 1\downarrow 1}) = \theta.\]
	Hence
	\[d^0_1(\sigma,y) = (d^0_1\sigma, d^0_{\sigma^{-1}(1)}y) = (\theta, d^0_ry) = (\theta,x).\]
	We may therefore extend \(\pi\) to \(\pi' = \pi \tfrac{d^0_1}{} (\sigma,y)\). Since \(\sigma^{-1}(1) = r\), we have \(\phi(\pi') = \phi(\pi) \tfrac{d^0_r}{} x = \rho'\), i.e., \(\pi' R \rho'\). 
\end{proof}

If one views symmetric HDAs as HDAs of a particular type, it is natural to define two symmetric HDAs to be hhp-bisimilar if they are hhp-bisimilar as HDAs. From this point of view, symmetric HDAs are a priori at most as expressive as ordinary HDAs. Since, by Theorem \ref{symhhp}, every HDA is hhp-bisimilar to a symmetric one, symmetric HDAs are actually as expressive as ordinary HDAs. The fact that symmetric HDAs are at least as expressive as ordinary HDAs can also be inferred from the following corollary of Theorem \ref{symhhp}:

\begin{cor}
	Two HDAs \(\Q\) and \(\Q'\) are hhp-bisimilar if and only if \(S\Q\) and \(S\Q'\) are hhp-bisimilar. 
\end{cor}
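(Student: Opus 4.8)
The plan is to deduce the corollary entirely from Theorem~\ref{symhhp}, together with the (unstated but easily verified) fact that hhp-bisimilarity is an equivalence relation on HDAs; in fact only its symmetry and transitivity are needed. Write $\Q \approx \Q'$ to mean that $\Q$ and $\Q'$ are hhp-bisimilar. Theorem~\ref{symhhp} supplies $\Q \approx S\Q$ and $\Q' \approx S\Q'$. Granting that $\approx$ is symmetric and transitive, both implications reduce to a short chain: if $\Q \approx \Q'$, then $S\Q \approx \Q \approx \Q' \approx S\Q'$, whence $S\Q \approx S\Q'$; conversely, if $S\Q \approx S\Q'$, then $\Q \approx S\Q \approx S\Q' \approx \Q'$, whence $\Q \approx \Q'$.

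So the real content is to check that $\approx$ is symmetric and transitive. For symmetry, given an hhp-bisimulation $R$ between HDAs $A$ and $B$, I would verify that the opposite relation $R^{\mathrm{op}} = \{(\rho,\pi) : \pi R \rho\}$ is an hhp-bisimulation between $B$ and $A$. Inspecting conditions (1)--(9) of the definition, they occur in mirror-symmetric pairs---(3)/(4), (5)/(6), (8)/(9)---while (1), (2) and (7) are manifestly symmetric; hence $R^{\mathrm{op}}$ satisfies all nine. For transitivity, given hhp-bisimulations $R$ between $A$ and $B$ and $R'$ between $B$ and $C$, I would verify that the relational composite $R'' = \{(\pi,\tau) : \exists\,\rho,\ \pi R \rho \text{ and } \rho R' \tau\}$ is an hhp-bisimulation between $A$ and $C$. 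Each condition is obtained by applying the corresponding condition first for $R$ and then for $R'$ to produce the required intermediate cube path: for (3), one lifts $\pi \xleftrightarrow{\ell} \pi'$ across $R$ to obtain $\rho'$ with $\rho \xleftrightarrow{\ell} \rho'$ and $\pi' R \rho'$, then across $R'$ to obtain $\tau'$ with $\tau \xleftrightarrow{\ell} \tau'$ and $\rho' R' \tau'$, giving $\pi' R'' \tau'$; conditions (4)--(6), (8), (9) are handled analogously. Condition (2) composes because equality of split traces is transitive, condition (7) because ``being a final state'' is matched on both sides, and condition (1) because the unique length-$0$ cube paths of $A$, $B$, $C$ (their respective initial states) are pairwise related.

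The only step requiring any care is transitivity, where one must confirm for each of the nine clauses that the intermediate cube path produced by $R$ is a valid input to the matching clause of $R'$ and that the directions of the back-and-forth conditions are kept consistent. Since the conditions are phrased in a form stable under relational composition, I expect no genuine obstacle here; the verification is routine once the chaining pattern above is fixed.
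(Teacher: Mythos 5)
Your proposal is correct and is essentially the argument the paper intends: the corollary is stated without proof precisely because it is an immediate consequence of Theorem \ref{symhhp} together with the standard fact that hhp-bisimilarity is symmetric and transitive, which is exactly what you supply. Your verification of symmetry (via the opposite relation, using the mirror-symmetric pairing of conditions (3)/(4), (5)/(6), (8)/(9)) and of transitivity (via relational composition of the two bisimulations) is sound, so there is no gap.
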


\providecommand{\bysame}{\leavevmode\hbox to3em{\hrulefill}\thinspace}
\providecommand{\MR}{\relax\ifhmode\unskip\space\fi MR }
\providecommand{\MRhref}[2]{%
  \href{http://www.ams.org/mathscinet-getitem?mr=#1}{#2}
}
\providecommand{\href}[2]{#2}

\end{document}